\newcommand{\bra}[1]{\langle #1 |}
\newcommand{\ket}[1]{| #1 \rangle}
\newcommand{\braket}[2]{\langle #1 | #2 \rangle}
\newcommand{\Tr}{\hbox{Tr}}
\def\C{{\mathbbm C}}
\def\N{{\mathbbm N}}
\def\P{{\mathbbm P}}
\def\({\left(}
\def\){\right)}
\newcommand{\vi}{\vec{n}}
\newtheorem{theorem}{Theorem}
\begin{document}

\title{Which multiphoton states are related via linear optics?}

\author{Piotr Migda{\l}}
\email{pmigdal@gmail.com}
\homepage{http://migdal.wikidot.com}
\affiliation{ICFO--Institut de Ci\`{e}ncies Fot\`{o}niques, 08860
  Castelldefels (Barcelona), Spain}

\author{Javier Rodr\'{\i}guez-Laguna}
\affiliation{Mathematics Department, Universidad Carlos III de Madrid, Spain}
\affiliation{ICFO--Institut de Ci\`{e}ncies Fot\`{o}niques, 08860
  Castelldefels (Barcelona), Spain}

\author{Micha{\l} Oszmaniec} 
\affiliation{Center for Theoretical Physics, Polish Academy of
  Sciences, Warszawa, Poland}

\author{Maciej Lewenstein}
\affiliation{ICFO--Institut de Ci\`{e}ncies Fot\`{o}niques, 08860
  Castelldefels (Barcelona), Spain}
\affiliation{ICREA--Instituci\'o Catalana de Recerca i Estudis
  Avan\c{c}ats, Barcelona, Spain}

\date{\today}
\begin{abstract}
We investigate which pure states of $n$ photons in $d$ modes can be transformed into each other via linear optics, without post-selection. In other words, we study the local unitary (LU) equivalence classes of symmetric many-qudit states. Writing our state as $f^\dagger\ket{\Omega}$, with $f^\dagger$ a homogeneous polynomial in the mode creation operators, we propose two sets of LU-invariants: (a) {\em spectral invariants}, which are the eigenvalues of the operator $ff^\dagger$, and (b) {\em moments}, each given by the norm of the symmetric component of a tensor power of the initial state, which can be computed as vacuum expectation values of $f^k(f^\dagger)^k$.
We provide scheme for experimental measurement of the later,
as related to the post-selection probability of creating state $f^{\dagger k}\ket{\Omega}$ from $k$ copies of $f^{\dagger}\ket{\Omega}$.
\end{abstract} 

\maketitle

%%%%%%%%%%%%%%%%%%%%%%%%%%%%%%%%%%%%%%%%%%%%%%%%%%%%%%%%%%%%%%%%%%%%%%%%
\section{Introduction}

The characterization of entanglement is one of the most relevant problems of quantum theory \cite{Horodecki2009}. Bipartite entanglement for pure states is well understood, due to the invariance under local operations of the {\em entanglement spectrum}. Multipartite entanglement, on the other hand, is a challenging problem. An equivalent formulation of the problem is the question about which pure states can be mutually transformed via local unitary transformations, or {\em LU-equivalence}.

In the field of {\em quantum optics} this question can be recast in this way: can a given multi-photon state, $\ket{\psi_1}$, be transformed into another one, $\ket{\psi_2}$, using only {\em linear optics}? By linear (or {\em passive}) optics we mean the use of beam-splitters and wave plates, which is known to be equivalent to the action of arbitrary unitary operations on each mode \cite{Lee2002}. This question bears special relevance both in theory and practical applications. On the theoretical side, linear optics with post-selection has been proved to be able to efficiently realize a universal quantum computer \cite{Knill2001, Kok2007}. But even without post-selection, linear optics transformations of multi-photon states constitute an intermediate stage between classical and full-fledged quantum computation \cite{Aaronson2010}. On the practical side, our ability to generate decoherence-free states \cite{Bourennane2004, Wu2011} relies on our ability to transform multi-photon states.
Operation by linear optics can be viewed as multi-particle interference, as opposed to multi-particle interaction. But beyond a generic interference phenomenon, it bears specific effects which are specific to bosons \cite{Tichy2012,Tichy2013}.

This paper considers the equivalence under linear optics transformations of pure states of $n$ photons in $d$ modes, disregarding the possibility of post-selection. Thus, the problem is framed as one of LU-equivalence of bosonic states \cite{Migdal2013aaa}.

As an illustration, let us consider a state of two photons occupying two different modes or channels, $\ket{\psi_1}=\ket{1,1}$. It is possible to transform this state into $\ket{\psi_2}=(\ket{2,0}-\ket{0,2})/\sqrt{2}$ using Hong-Ou-Mandel interference \cite{Hong1987} (i.e.: two-photon interference in a $50\%:50\%$ beam splitter), but it is not possible to place both photons in the same channel with $100\%$ efficiency, i.e.: $\ket{\psi_3}=\ket{2,0}$ is {\em not} achievable. Of course, it is always possible to perform post-selection, measuring the number of photons in the second channel and retaining only the states that contain none, but the efficiency will drop to $50\%$.

Solutions of the LU-equivalence problem for distinguishable particles have been found for systems of a few particles \cite{Sudbery2000, DeVicente2012}. General approaches \cite{Carteret2000, Acin2001, Verstraete2003, Kraus2010a, Kraus2010b} involve the search for a standard form, which is hard to calculate analytically. In our work we present two sets of LU invariants, i.e. functions on the Hilbert space, which are invariant under the action of linear optics.

In this paper we focus on specific methods for bosonic states that provide analytic invariants. These invariants are built upon $f^\dagger$, the homogeneous polynomial on the creation operators which transforms the vacuum into our state. We present two families of LU-invariants, i.e.: two sets of complex-valued functions on the Hilbert space which are invariant under linear optics:

\begin{itemize}
\item The spectrum of the operator $f f^\dagger$.
\item The moments: vacuum expectation values of the operators $f^k f^{\dagger k}$, for any natural $k$.
\end{itemize}

The considered invariants are both simple to calculate and, as we will show, sufficient to distinguish states in many practical situations, even some states which are generally difficult to handle.

The article is organized as follows.
In Section \ref{s:Problem} we formally introduce the problem of LU-equivalence of bosonic states. In that part we also discuss briefly two simple settings: two particles (in an arbitrary number of modes) and two modes (containing any number of particles). 
In Section~\ref{s:ff} we present the construction and relevance of spectral invariants related to the operator $f f^\dagger$.
We show that, despite being infinite dimensional, this operator can be easily diagonalized, as it separates into blocks of fixed numbers of particles (not related to the photon count) which are related to many-body correlators.
In Section~\ref{s:replicas} we discuss the second set of invariants: vacuum expectation values of $f^k f^{\dagger k}$.
It corresponds to the projection of the tensor power of $k$ copies of our state (in the particle basis) onto the completely symmetric Hilbert space.
In Section~\ref{s:mode-product} we propose an interferometric scheme that, in principle, allows for a direct measurement of this set of invariants.
Moreover, such scheme allows direct experimental creation of states given by the polynomial $f^{k}$ for an arbitrary $k$.
In Section~\ref{s:examples} we apply our methods in concrete examples. We show that, using our invariants, we can solve the LU-equivalence problem for two particles in two modes and for three particles in two modes. We also study which states from the four-particle singlet subspace can be reached using linear optics from another state in the same singlet subspace.
Moreover, we show that, at least in some cases, $k$-particle blocks of $f f^\dagger$ provide more invariants than $k$-particle reduced density matrices.
Finally, Section~\ref{s:conclusion} concludes and points out some further directions.
Some technical discussions are left for the appendices:
Appendix~\ref{app:majorana} solves the LU-equivalence for symmetric qubit states (i.e. $n=2$), using Majorana representation. 
In Appendix~\ref{app:schwinger} we introduce Schwinger-like representation for expressing arbitrary $k$-body correlations in terms of normally ordered creation and annihilation operators.

%%%%%%%%%%%%%%%%%%%%%%%%%%%%%%%%%%%%%%%%%%%%%%%%%%%%%%%%%%%%%%%%%%%%%%%%%%%

\section{Formulation of the problem} 
\label{s:Problem}

Let us consider the system of $n$-photons in $d$ modes. There are, at least, two possible descriptions of the Hilbert space ${\cal S}_n^d$ describing the system. In a {\em mode description}, i.e.: the second quantization picture (see for example \cite{FetterBook1971}), ${\cal S}_n^d$ is treated as a subspace of the full Fock space $\mathrm{Fock}\left(\C ^d \right)$. Let $\vec n\equiv(n_1,\cdots,n_d)$ be a multi-index denoting the photon count for each mode and let $|\vec n| = \sum_{k=1}^d n_k$. The basis states spanning ${\cal S}_n^d$ are specified by the photon count on each mode,
\begin{equation}
\ket{\vec n} =
\frac{(a^{n_1}_1)^\dagger\cdots(a^{n_d}_d)^\dagger}{\sqrt{(n_1!)\cdots(n_d!)}}
\ket{\Omega} 
\equiv 
{\tilde a}^\dagger_{\vec n} \ket{\Omega},\ |\vec n| =n \ .
\label{eq:multia}
\end{equation}
In the above expression $\ket{\Omega}$ is the Fock vacuum, $a_1,\ldots,a_d$ are annihilation operators, and ${\tilde a}^\dagger_{\vec n}$ is a normalized monomial defined as above, creating $\ket{\vec n}$ from vacuum.

In a {\em particle description}, Hilbert space ${\cal S}_n^d$ is treated as the permutation-symmetric subspace of $(\C^d)^{\otimes n}$,  $\mathrm{Sym}^n \left(\C ^d\right)$. Let us fix the basis vectors of $\C ^d$: $\ket{1},\ket{2},\ldots,\ket{d}$. Basis states of $\left(\C ^d\right)^{\otimes n}$ with a simple tensorial form,
\begin{equation}
\ket{\phi}=\ket{i_1}\otimes \ket{i_2}\otimes \cdots \otimes \ket{i_n} \ ,\ i_k \in \left\{1,\ldots,d\right\}\ ,
\end{equation}
are not {\em permutation symmetric}. A basis for $\mathrm{Sym}^n\left(\C ^d\right)$ is obtained from product vectors in $(\C^d)^{\otimes n}$ by symmetrization over all factors in the tensor product. Let us define an asymmetric state from given mode counts $\vi=\{n_1,\cdots,n_d\}$:
\begin{equation}
\ket{\vi}_A \equiv \ket{1}_P^{\otimes n_1} \otimes \ket{2}_P^{\otimes n_2}
\otimes \cdots \otimes \ket{d}_P^{\otimes n_d}\ .
\label{eq:ordered-state}
\end{equation}
In the above expression we explicitly put the subscript $P$ to emphasize that we deal with tensor product of states in particle representation. The state $\ket{\vi}_A$ can be thought as a naive state in particle representation with the corresponding photon counts for each mode, but with a non-physical identification between particles and modes. The corresponding normalized symmetric state is given by:
\begin{equation}
\ket{\vi} = 
\frac{\sqrt{n_1! \ldots n_d!}}{\sqrt{n!}}
\sum_{\mathrm{perm}} \ket{1}_P^{\otimes n_1} \otimes \ket{2}_P^{\otimes n_2}
\otimes \cdots \otimes \ket{d}_P^{\otimes n_d}\ ,
\label{eq:fock_in_particle_representation}
\end{equation}
where the sum is over the different permutations of the factors appearing in the tensor product. Notice the required normalization factor. There exists another way of expressing the state $\ket{\vi}$ in particle basis
\begin{equation}
\ket{\vi} = N(\vi)\; \P^{(n)}_{sym} \ket{\vi}_A \ ,
\end{equation}
where $\P^{(n)}_{sym}$ is the projector onto the completely symmetric subspace of $(\C^d)^{\otimes n}$ and the normalization factor $N(n)$ is given by
\begin{equation}
N(n) = \sqrt{\frac{n!}{n_1!\cdots n_d!}}\ .
\end{equation}

During most of this work, we will work within the mode description, as it is more natural for dealing with boson states. However, in some parts of this paper we will use also the particle representation, and we will proceed between them both when it is convenient. States written in the particle representation will have a subscript $P$. States written in mode representation will have commas between modes.

An arbitrary pure state of the system can be written as: 
\begin{equation}
\ket{\psi} = \sum_{|\vec n|=n} \alpha_{\vec n}\; \ket{\vec n} \ ,
\label{def.f}
\end{equation}
where $\alpha_{\vec n}$ are complex amplitudes and $\ket{\vec n}$ are normalized states with fixed number of photons in each mode. To each state $\ket{\psi}$ we associate a unique homogeneous polynomial in the creation operators according to the recipe:
\begin{equation}
\ket{\psi}=\sum_{|\vec n|=n} \alpha_{\vec n}\; \ket{\vec n}= f^\dagger\ket{\Omega}  
\quad\to\quad
f^\dagger \equiv 
\sum_{|\vec n|=n} \alpha_{\vec n} \; {\tilde a}^\dagger_{\vec n} \ .
\label{eq:state-polynomial}
\end{equation}

In what follows we describe action of linear (or passive) optics on pure states described in different representations. Within the mode representation, action of linear optics is mathematically expressed in the application of {\em unitary operations} on the creation operators, i.e.:
\begin{align}
a_i'^\dagger = \sum_{j=1}^{d} U_{ij} a_j^\dagger \ ,
\label{u.transf}
\end{align}
where $U \in SU(d)$. Conversely, all $SU(d)$ operations among the modes can be achieved with a sequence of two-mode operations, such as beam-splitters and wave plates, in a way which resembles the action of Euler angles \cite{Reck1994}. Alternatively, in particle representation, transformation \eqref{u.transf}, is equivalent to the action of the same $U$ on each particle:
\begin{equation}
\ket{\psi'}_P = U^{\otimes n} \ket{\psi}_P \ .
\label{u.transf2}
\end{equation}
The equivalence between both representations corresponds to the equivalence between first and second quantization pictures for bosonic states \cite{FetterBook1971}.

We are now ready to state the problem of equivalence between two bosonic pure states under the action of linear optics. The problem is formulated as follows. Given two pure states, $\ket{\psi_1}=f_1^\dag \ket{\Omega}$ and $\ket{\psi_2}=f_2^\dag \ket{\Omega}\in {\cal S}_n^d$, whether there exists a unitary transformation on the modes $U\in SU(d)$ such that $f_1$ and $f_2$ are related by a rotation among the variables
\begin{equation}
f_2(\vec{a})^\dagger = f_1(U^\dagger \vec{a})^\dagger \ . \label{lu.equivalence}
\end{equation}

Alternatively, in the particle description, \eqref{lu.equivalence} is equivalent to
\begin{equation}
\ket{\psi_2}_P = U^{\otimes n} \ket{\psi_1}_P \ .
\label{lu.equivalence1}
\end{equation}

Interestingly, as it was recently showed, condition \eqref{lu.equivalence1} is equivalent to the local unitary equivalence (or LU-equivalence) of pure states, when restricted to permutation-symmetric states \cite{Migdal2013aaa}. That is, for $\ket{\psi_1}_P$ and $\ket{\psi_2}_P$ permutation-symmetric, it is equivalent to the existence of unitary operators $\{U_i\}_{\{1,\ldots,n\}}$ such that $\ket{\psi_2}_P = U_1 \otimes \ldots \otimes U_n \ket{\psi_1}_P$.

As it was stated in the introduction, our approach to the equivalence problem \eqref{lu.equivalence} is based on the construction of particular classes of invariants of the local unitary group representing linear optics. Let us consider the action of a group $G$ on some set $X$. For $x\in X$ and $g\in G$, let us denote the action of $g$ on $x$ by $g\cdot x$, which again belongs to $X$. A function $h: X\mapsto X$ is invariant under the action of $G$ if and only if
\begin{equation}
h(g\cdot x)=h(x)\ \text{for all $x\in X$ and all $g\in G$\ .}
\end{equation}
In our case we have $X=\mathcal{S}^{d}_n$, $G=SU(d)$ and the action of $G$ is given by \eqref{u.transf} or, equivalently, by \eqref{u.transf2}. A theorem by Hilbert states that, for a compact group $G$ acting in a unitary fashion on a finite dimensional vector space, there exists a finite number of independent invariants (which are polynomial in the coordinates of $\ket{\psi}$) that are able to distinguish whether two vectors belong to the same orbit of $G$ \cite{Weyl1997, Kraft1996, Grassl1998}.
A convenient way to write down the invariants involves using tensor diagrams \cite{Biamonte2013invariants} --- they make it explicit why certain polynomials are invariant and allow us to avoid multiple index contractions.
Thus, the LU-equivalence problem can be solved completely once the minimal set of independent polynomial invariants is known. This problem is in general unsolved. For a recent developments in theory of invariants in the context of entanglement theory see \cite{Vrana2011}. In our paper we do not attempt to study all invariants of the action of $SU\left(d\right)$ on $\mathcal{S}_{n}^d$. Instead, we focus on two families of invariants, analyzing their usefulness and physical relevance.

\subsection*{Simple examples}

Before considering the general problem, let us focus on two simple cases:
\begin{itemize}
\item only two particles ($n=2$) in an arbitrary number of modes,
  or, alternatively,
\item an arbitrary number of particles in just two modes ($d=2$)\ ,
  i.e.: permutation-symmetric states for qubits.
\end{itemize}

For two particles it suffices to perform a variant of the Schmidt decomposition, for symmetric states \cite{Li2001}, i.e.:
\begin{align}
\ket{\psi} = \sum_{i=1}^{d} \lambda_i \ket{\phi_i}_P \otimes \ket{\phi_i}_P \ ,
\end{align}
where $\lambda_i\geq 0$ and $\ket{\phi_i}$ are pairwise orthogonal states, the same for both particles. Thus, two pure states of two photons are related by linear optics if and only if they have the same sets of Schmidt values $\{\lambda_i\}$. In this case, $f$ is formally a quadratic polynomial in the number of modes, $d$. The Schmidt decomposition allows us to rewrite it as:
\begin{equation}
f = \sum_{i=1}^d \frac{\lambda_i}{\sqrt{2}} b_i^2
\label{factorize.n2}
\end{equation}
for a certain set of $\vec{b} = U \vec{a}$.

When there are just two modes ($d=2$), it is possible to use the {\em Majorana stellar representation} (see e.g. \cite{Aulbach2011a, Markham2010} for a short introduction) and write the state as:
\begin{equation}
\ket{\psi} = A \prod_{i=1}^{n} \left(\cos\({\theta_i\over 2}\) a_1^\dagger
+ e^{i \varphi_i} \sin\({\theta_i\over 2}\) a_2^\dagger \right) \ket{\Omega},
\label{majorana}
\end{equation}
where pairs $(\theta_i, \varphi_i)$ can be interpreted as coordinates of points on the Bloch sphere, and $A$ is a normalizing factor. Equation (\ref{majorana}) is equivalent to a factorization of the homogeneous polynomial defined in eq. (\ref{eq:state-polynomial}) in the following form:
\begin{equation}
f(a_1,a_2)= \tilde{A} a_2^n
\prod_{i=1}^{n}
\left( \tfrac{a_1}{a_2} - x_i \right)
\label{factorize.d2},
\end{equation}
where $\tilde{A}$ is the coefficient of $a_1^n$, and we have introduced variables $x_i = - e^{i \varphi_i} \tan(\theta_i/2)$.

Linear optics acts on this representation as a rotation of the Bloch sphere as a whole. Consequently, two states are related by linear optics if and only if their Majorana representations are related by rotation \cite{Mathonet2010}. A straightforward equivalence test based on this criterion is discussed in Appendix \ref{app:majorana}.

%%%%%%%%%%%%%%%%%%%%%%%%%%%%%%%%%%%%%%%%%%%%%%%%%%%%%%%%%%%%%%%%%%

\section{Spectral method}\label{s:ff}

Let us consider the $d$-mode, $n$-particle bosonic state given in equation (\ref{def.f}), $\ket{\psi}=f^\dagger\ket{\Omega}$, where $f(a_1,\cdots,a_d)$ is a homogeneous polynomial of degree $n$ in the annihilation operators for the modes. Now, let us consider the operator $ff^\dagger$.

We will show that:
\begin{itemize}
\item its spectrum is invariant with respect to $SU(d)$ transformations (\ref{u.transf}),
\item it decomposes into an infinite number of blocks of finite size, but
\item the first $n$ blocks suffice to reconstruct the state.
\end{itemize}

\subsection{Invariance of the spectrum}
\begin{theorem}
The spectrum of $ff^\dagger$ is invariant with respect to arbitrary rotations between the modes, that is,
\begin{equation}
\text{Sp}\left[f(\vec{a}) f^\dagger(\vec{a})\right] =
\text{Sp}\left[f(U\vec{a}) f^\dagger(U\vec{a})\right]
\label{ff.invariance}
\end{equation}
for every $U\in SU(d)$. 
\end{theorem}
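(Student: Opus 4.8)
The plan is to recognize that the substitution $f(\vec a)\mapsto f(U\vec a)$ does not produce a genuinely new operator, but only a \emph{unitary conjugate} of $ff^\dagger$, so that the equality of spectra in \eqref{ff.invariance} becomes automatic. The one ingredient I need is the standard fact, already implicit in the equivalence of the first- and second-quantization pictures around \eqref{u.transf} and \eqref{u.transf2}, that the passive mode transformation \eqref{u.transf} is implemented on the whole Fock space by a single unitary $\mathcal{U}$. Concretely, $\mathcal{U}=\bigoplus_{m\geq 0}U^{\otimes m}$ acts as $U^{\otimes m}$ on each $m$-particle symmetric sector (this is exactly \eqref{u.transf2} read off sector by sector), and it conjugates the mode operators as
\[
\mathcal{U}^\dagger a_i\,\mathcal{U}=\sum_{j=1}^{d}U_{ij}\,a_j=(U\vec a)_i ,
\]
with the analogous relation for $a_i^\dagger$ obtained by taking the adjoint.

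First I would use that $f$ is a polynomial and that conjugation $X\mapsto \mathcal{U}^\dagger X\,\mathcal{U}$ is linear and multiplicative, i.e.\ $\mathcal{U}^\dagger(XY)\mathcal{U}=(\mathcal{U}^\dagger X\mathcal{U})(\mathcal{U}^\dagger Y\mathcal{U})$. Applying it monomial by monomial to $f(\vec a)=\sum_{\vec n}\bar\alpha_{\vec n}\,\tilde a_{\vec n}$ and replacing each $a_i$ by $(U\vec a)_i$ gives
\[
f(U\vec a)=\mathcal{U}^\dagger f(\vec a)\,\mathcal{U}.
\]
Taking adjoints and writing $f^\dagger(U\vec a):=\bigl(f(U\vec a)\bigr)^\dagger$ yields the partner identity $f^\dagger(U\vec a)=\mathcal{U}^\dagger f^\dagger(\vec a)\,\mathcal{U}$.

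Second I would multiply the two identities and cancel the interior $\mathcal{U}\,\mathcal{U}^\dagger=\mathbbm{1}$:
\[
f(U\vec a)\,f^\dagger(U\vec a)=\mathcal{U}^\dagger\bigl[f(\vec a)\,f^\dagger(\vec a)\bigr]\mathcal{U}.
\]
Hence $f(U\vec a)f^\dagger(U\vec a)$ and $ff^\dagger$ are unitarily equivalent, and since unitary conjugation preserves eigenvalues this establishes \eqref{ff.invariance}.

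The only step that demands care --- and the main potential obstacle --- is the first one: pinning down the implementing unitary $\mathcal{U}$ and fixing conventions ($U$ versus $U^\dagger$, left versus right conjugation) so that they agree with \eqref{u.transf} and \eqref{lu.equivalence}. Since the statement quantifies over all $U\in SU(d)$, any internally consistent choice suffices. I would also note that $\mathcal{U}$ commutes with the total-number operator and therefore preserves the decomposition of $ff^\dagger$ into fixed-particle-number blocks described in Section~\ref{s:ff}; this is not needed for the spectrum statement itself, but shows that the invariance holds block by block, which is precisely what makes the spectrum a usable invariant in practice.
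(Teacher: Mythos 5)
Your proof is correct and follows essentially the same route as the paper: both implement the mode rotation $U$ as a unitary on Fock space, show it conjugates the mode operators (hence conjugates $ff^\dagger$), and conclude equality of spectra from unitary equivalence. The only cosmetic difference is that you define the implementing unitary as $\bigoplus_{m\geq 0}U^{\otimes m}$ directly, while the paper writes it as the second-quantized exponential $\tilde{U}=\exp\bigl(i\sum_{i,j}H_{ij}a_i^\dagger a_j\bigr)$ and verifies the conjugation relation via the Hadamard lemma; these are the same operator.
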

\begin{proof}
Each unitary operator acting on the modes $U = \exp(i H)$ (with Hermitian $H$) can be promoted to act on the full Fock-space via a second quantization extension:
\begin{align}
\tilde{U} = \exp\(i \sum_{i,j=1}^{d} H_{ij} a_i^\dagger a_j \),
\end{align}
where $\tilde{U}\cong U^{\otimes n}$ on our Hilbert space ${\cal S}_n^d$. This operator $\tilde{U}$ is unitary and acts on monomials in a natural way, i.e.: $\tilde{U}^\dagger a_j \tilde{U} = \sum_i U_{ji} a_i$, which can be checked with the Hadamard lemma. Consequently,
\begin{align}
f(U\vec{a}) f^\dagger(U\vec{a}) = \tilde{U}^\dagger f(\vec{a}) f^\dagger(\vec{a}) \tilde{U},
\end{align}
i.e.: the two operators are unitarily related and, thus, they have the same spectrum.
\end{proof}

\subsection{Block Decomposition}
\label{s:block-decomposition}

Since operator $f$ is a {\em homogeneous} polynomial of degree $n$ on the annihilation operators, each summand in operator $ff^\dagger$ contains $n$ creation and $n$ annihilation operators. Thus, $ff^\dagger$ preserves the number of photons $k$, and decomposes into blocks $ff^\dagger|_k$. Let $\vec k$ and $\vec k'$ be multi-indices with $|\vec k|=|\vec k'|=k$. Then, matrix elements of $ff^\dagger|_k$ can be shown to correspond to {\em correlators} of our state:
\begin{align}
&\bra{\vec k'}\; f f^\dagger\; \ket{\vec k} =
\bra{\Omega}\; \tilde a_{\vec k'}\; 
f f^\dagger\; 
\tilde a_{\vec k}^\dagger \; \ket{\Omega}  \nonumber\\ =& 
\bra{\Omega}\; f\; \tilde a_{\vec k'} \tilde a^\dagger_{\vec  k} \; 
f^\dagger\; \ket{\Omega} = \bra{\psi} \; \tilde a_{\vec k'} \tilde
a^\dagger_{\vec k}\; \ket{\psi}.
\label{eq:spec2correl}
\end{align}

For example, for two modes and particle numbers $k\in\{0, 1,2\}$, the blocks are given by:
\begin{align}
f f^\dagger|_{k=0} &=
\left[%
\begin{matrix}
\bra{\psi}1\ket{\psi}
\end{matrix}
\right]
\\
f f^\dagger|_{k=1} &=
\left[%
\begin{matrix}
\bra{\psi}a_1 a_1^\dagger\ket{\psi} & \bra{\psi}a_1 a_2^\dagger\ket{\psi} \\
\bra{\psi}a_2 a_1^\dagger\ket{\psi} & \bra{\psi}a_2 a_2^\dagger\ket{\psi}
\end{matrix}
\right]
\\
f f^\dagger|_{k=2} &=
\end{align}%
\begin{equation}
\left[%
\begin{matrix}
\bra{\psi} \frac{a_1^2 a_1^{\dagger 2}}{2} \ket{\psi}
& \bra{\psi} \frac{a_1^2 a_1^\dagger a_2^\dagger}{\sqrt{2}} \ket{\psi}
& \bra{\psi} \frac{a_1^2 a_2^{\dagger 2}}{2} \ket{\psi}
\\
\bra{\psi} \frac{a_1 a_2 a_1^{\dagger 2}}{\sqrt{2}} \ket{\psi}
& \bra{\psi} a_1 a_2 a_1^\dagger a_2^\dagger \ket{\psi}
& \bra{\psi} \frac{a_1 a_2 a_2^{\dagger 2}}{\sqrt{2}} \ket{\psi}
\\
\bra{\psi} \frac{a_2^2 a_1^{\dagger 2}}{2} \ket{\psi}
& \bra{\psi} \frac{a_2^2 a_1^\dagger a_2^\dagger}{\sqrt{2}} \ket{\psi}
& \bra{\psi} \frac{a_2^2 a_2^{\dagger 2}}{2} \ket{\psi}
\end{matrix}
\right]\nonumber.
\end{equation}

The matrix elements of $ff^\dagger|_k$ are $k$-particle correlators. For $k=0$, the only matrix element is the norm of the state.
Note that the spectrum of $f f^\dagger$ is real, as each block $f f^\dagger|_k$ is a Hermitian matrix.

Unitary rotations do not change the particle count. Consequently, the block structure is preserved under rotations and, thus, the $\text{Sp}[f f^\dagger|_k]$ are invariants. If the eigenvalues for two states differ, $\text{Sp}[f_1 f_1^\dagger|_k] \neq \text{Sp}[f_2 f_2^\dagger|_k]$, then the two states {\em can not} be related by a unitary rotation of the modes.
The converse is, in general, not true --- states related by complex conjugation (of $f$), so preserving the spectrum, are not necessarily related by linear optics (see \ref{s:three-qubits} for an example).
It, however, remains an open question whether the converse (up to complex conjugation) is true.

Instead of the eigenvalues, we may compute the characteristic polynomial:
\begin{align}
w_k(\lambda) = \det\left[ f f^\dagger|_{k} - \lambda \mathbbm{I} \right].
\label{characteristic.polynomial}
\end{align}
Since its coefficients are in one-to-one correspondence with the spectrum, the method is equally powerful. Moreover, the coefficients of $w_k(\lambda)$ are polynomials in the coefficients of $f$, which is closer in spirit to formulation of Hilbert's theorem. An alternative, but equivalent, route is to investigate the moments $f f^\dagger|_k$,  $\Tr[(f f^\dagger|_k)^l]$. They are in one to one correspondence with the characteristic polynomial $w_k(\lambda)$ by the virtue of Newton identities \cite{Mead1992}.

For $k=1$, the block is related to the single-particle reduced density matrix, i.e.:
\begin{align}
\rho_1 = f f^\dagger|_{k=1} - n \mathbbm{I}.
\end{align}
For $k>1$ we do not recover the reduced $k$-particle density matrix and, as we will show,
$f f^\dagger|_k$
can provide more entanglement invariants than the spectrum of the reduced density matrices with those respective particle numbers.

Even the first block can give interesting results. We can show that no-go observation for deterministically changing one Fock state into another using with linear optics.

Let us look at $f f^\dagger|_1$. As it is a Fock state, its matrix is diagonal (i.e terms $\langle \psi | a_i a_j^\dagger |\psi \rangle$ vanish for $i\neq j$). The diagonal values, and therefore the eigenvalues, are $\bra{\psi} a_i a_i^\dagger \ket{\psi} = n_i + 1$. As their are invariants, two Fock states can be deterministically related by linear optics if and only if they have the same photon counts (up to a permutation of modes).

\subsection{Correlators and reconstruction}

Knowledge of $f f^\dagger |_k$ for all block particle numbers $k\leq n$ suffices to reconstruct the state $f^\dagger\ket{\Omega}$. The reconstruction strategy is to build the matrix elements of the corresponding density matrix
\begin{align}
\rho_{\vec{n} \vec{n}'} =
\bra{\psi} a_{\vec n}^\dagger a_{\vec n'} \ket{\psi},
\end{align}
which can be done by using the commutation relations in order to express the anti-normally ordered terms into terms with normal ordering.

However, we do not claim that higher blocks with $k>n$ are not important. While they are not required to reconstruct the state, there might be pairs of states whose polynomials $w_0$ up to $w_n$ coincide, yet their $w_k$ differ for some $k>n$.
That is, eigenvalues do not capture relative orientation of eigenvectors for different blocks. Eigenvalues for $k>n$ might incorporate relations between eigenvectors for $k\leq n$.

Let us provide a more straightforward way to reconstruct the state, which does not involve calculating inverting the normal ordering of the operators. Let us recall the notion of {\em frame representation} of a many qudit state \cite{Ferrie2011}. Let $\{\sigma^i\}$ be an orthogonal (in trace norm) set of generators of $SU(d)$ plus the identity (i.e. a basis for $d\times d$ Hermitian matrices). For $SU(2)$ we may just choose the Pauli matrices: $\{\mathbbm{I},\sigma^x,\sigma^y,\sigma^z \}$. Any density matrix of a $n$-qudit state can be written as:
\begin{align}
\rho = \sum_{i_1,\cdots,i_n} t_{i_1 i_2 \ldots i_n}
\sigma^{i_1} \otimes \sigma^{i_2} \otimes \ldots \otimes \sigma^{i_n}
\equiv \sum_{\vec\imath} t_{\vec\imath} \sigma^{\vec\imath},
\label{frame.rep}
\end{align}

Note that for permutation-symmetric states, $t_{i_1 i_2 \ldots i_n}$ must be permutation-symmetric. Since the $\{\sigma^i\}$ are orthogonal, the state can be reconstructed from the expectation values of {\em strings} of $\sigma^{i}$ operators:
\begin{align}
t_{i_1 i_2 \ldots i_n} = \frac{1}{2^n}\Tr\left[ \sigma^{i_1} \otimes \sigma^{i_2} \otimes \ldots \otimes \sigma^{i_n} \;\rho \right].
\label{reconstruction}
\end{align}

Expectation values of permutation-symmetric strings of $\sigma^i$ can be obtained from the correlators $ff^\dagger|_k$, as shown in Appendix (\ref{app:schwinger}). The idea behind the proof is the use of a Schwinger-like representation,
related to the one for spin systems ---see \cite[Chapter 7.2]{Auerbach1994}, and develop identities of the form
\begin{align}
\bra{\psi} \left( \sum_{perm} \sigma^{\vec{\imath}} \right) \ket{\psi} = \bra{\Omega} f A(\vec\imath) f^\dagger \ket{\Omega},
\end{align}
where $A(\vec\imath)$ is a polynomial in creation and annihilation operators. From a practical perspective it allows calculating the expectation value without immersing everything in the full Hilbert space of distinguishable particles,  which has a very high dimension.

For example, for $d=2$, we get the following relation
\begin{gather}
\bra{\psi}
\sum_{perm} (\mathbbm{I})^{\otimes n_{I}}
\otimes (\sigma^x)^{\otimes n_x}
\otimes (\sigma^y)^{\otimes n_y}
\otimes (\sigma^z)^{\otimes n_z}
\ket{\psi}\label{eq:pauli_sym}\\
=\bra{\Omega} f
: \left( a^\dagger a + b^\dagger b \right)^{n_I}
\left( a^\dagger b + b^\dagger a \right)^{n_x}\\
\times \left( - i a^\dagger b + i b^\dagger a \right)^{n_y}
\left( a^\dagger a - b^\dagger b\right)^{n_z} :
f^\dagger \ket{\Omega},
\label{eq:schwinger_form}
\end{gather}
where $n_{I} + n_x + n_y + n_z = n$ (covering all symmetric correlators), the sum is over all $n!$ permutations and :expression: stands for the normal ordering, i.e. putting the creation operators on the left and the annihilation on the right. Note that, for most of this article, we use anti-normal ordering, as we work with operators of the form $f^k f^{\dagger k}$.

%%%%%%%%%%%%%%%%%%%%%%%%%%%%%%%%%%%%%%%%%%%%%%%%%%%%%%%%%%%%%%%%%%%%%%%

\section{Symmetric component of tensor powers}
\label{s:replicas}

An alternative set of invariants can be found by studying the symmetric component of tensor copies of a given multi-photon state, taken in the particle representation.

Typically, $\ket{\psi}_P^{\otimes k}$ is not permutation-symmetric, therefore it does not describe a boson state. However, we will show that its projection on the symmetric subspace is proportional to $f^{\dagger k}\ket{\Omega}$, a $kn$-photon state in $d$ modes.

Let us give an example, with $n=2$ and $d=2$, $\ket{\psi} = \ket{1,1} = {1\over\sqrt{2}} (\ket{12}_P+\ket{21}_P)$. If we multiply it tensorially with itself, we get $\ket{\psi}_P^{\otimes 2} = {1\over2}(\ket{12}_P+\ket{21}_P)\otimes (\ket{12}_P+\ket{21}_P)$. This is {\em not} a valid photon state, because it is {\em not} permutation-symmetric:
\begin{gather}
\tfrac{1}{2} \left( \ket{1212}_P + \ket{1221}_P \right.\label{eq:nonsym-prod}\\ 
\left. + \ket{2112}_P + \ket{2121}_P \right)\nonumber
\end{gather}

Nonetheless, it can be projected on the permutation-symmetric subspace, $\hbox{Sym}^{kn}(\C^d)$.
Let $\P^{(kn)}_{sym}$ stand for that projector, where the upper index represents the number of particles to be symmetrized, in this case --- $kn$.
Then,
\begin{equation}
\bra{\psi}_P^{\otimes 2} \P^{(4)}_{sym} \ket{\psi}_P^{\otimes 2} = {2\over 3}
\end{equation}
because \eqref{eq:nonsym-prod} contains 4 out of 6 possible permutations,
\begin{equation}
\P^{(4)}_{sym} \ket{1212}_P =
\tfrac{1}{6} \left( \ket{1122}_P + \text{permutations} \right).
\end{equation}

In order to make the LU-invariance of those values $\bra{\psi}_P^{\otimes k} \P^{(kn)}_{sym} \ket{\psi}_P^{\otimes k}$ manifest, we will show their relation to
\begin{equation}
\bra{\Omega} f^k f^{\dagger k} \ket{\Omega}
\label{fkfk}
\end{equation}
i.e.: the {\em vacuum expectation values of} $f^k f^{\dagger k}$ for all $k\in\N$. These are easy to compute and their invariance is straightforward, since the vacuum is rotation-invariant. Thus, we will prove the following:
\begin{theorem}%
For every homogeneous polynomial $f$, such that $\ket{\psi} = f^\dagger \ket{\Omega}$, the state generated by its $k$-th power is proportional to the state $\ket{\psi}_P^{\otimes k}$ projected on the fully symmetric space of all particles, that is,
\begin{equation}
f^{\dagger k} \ket{\Omega} = \tfrac{\sqrt{(kn)!}}{\sqrt{(n!)^k}} \P^{(kn)}_{sym} \ket{\psi}^{\otimes k}_P,
\label{eq:fk_and_symmetrization}
\end{equation}
so, in particular:
\begin{equation}
\bra{\Omega}f^kf^{\dagger k} \ket{\Omega} = \tfrac{(kn)!}{(n!)^k} \bra{\psi}_P^{\otimes k} \P^{(kn)}_{sym} \ket{\psi}_P^{\otimes k}.
\label{fkfk.projector}
\end{equation}
\end{theorem}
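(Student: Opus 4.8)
The plan is to establish the operator identity \eqref{eq:fk_and_symmetrization} directly, and then obtain \eqref{fkfk.projector} as an immediate corollary by pairing each side with its own conjugate. The central object to understand is the map sending the monomial $\tilde a^\dagger_{\vec n}$ (mode representation) to the symmetrized product vector $\ket{\vec n}$ (particle representation). This dictionary is already spelled out in \eqref{eq:fock_in_particle_representation}: creating $\ket{\vec n}$ from the vacuum via $\tilde a^\dagger_{\vec n}$ is the same as symmetrizing the ordered state $\ket{\vec n}_A$ and normalizing by $N(\vec n)=\sqrt{n!/(n_1!\cdots n_d!)}$. So I would first record, for a single factor, that $f^\dagger\ket{\Omega}=\ket{\psi}_P$ lives in $\mathrm{Sym}^n(\C^d)$, and that multiplication by another creation polynomial corresponds, on the particle side, to tensoring with a second copy followed by re-symmetrization over the enlarged set of particles.

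Next I would make this precise by tracking how powers of $f^\dagger$ act. Writing $\ket{\psi}=\sum_{|\vec n|=n}\alpha_{\vec n}\ket{\vec n}$, the key computational fact is that applying $f^\dagger$ a total of $k$ times builds the state $f^{\dagger k}\ket{\Omega}$ out of products $\tilde a^\dagger_{\vec n^{(1)}}\cdots\tilde a^\dagger_{\vec n^{(k)}}\ket{\Omega}$, each of which is (up to the normalization bookkeeping) a fully symmetrized tensor product of the corresponding single-copy basis states on $kn$ particles. Because all $k$ factors are symmetrized together into $\mathrm{Sym}^{kn}(\C^d)$, the cross terms automatically coincide with $\P^{(kn)}_{sym}$ acting on the unsymmetrized tensor power $\ket{\psi}^{\otimes k}_P$. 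The only thing left to pin down is the scalar prefactor, which I expect to be exactly the combinatorial ratio $\sqrt{(kn)!}/\sqrt{(n!)^k}$: each of the $k$ copies contributes a normalization that, when the particles are merged and re-normalized against the $(kn)$-particle convention, produces this factor. I would verify the prefactor on a single monomial (say a pure Fock state $\ket{\vec n}$) where both sides are explicitly computable, and then invoke linearity in the $\alpha_{\vec n}$ to extend to general $f$.

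Finally, \eqref{fkfk.projector} follows by taking the squared norm of \eqref{eq:fk_and_symmetrization}: the left side is $\bra{\Omega}f^kf^{\dagger k}\ket{\Omega}$, the prefactor squares to $(kn)!/(n!)^k$, and since $\P^{(kn)}_{sym}$ is an orthogonal projector ($\P^{2}=\P=\P^\dagger$) the right side collapses to $\bra{\psi}^{\otimes k}_P\P^{(kn)}_{sym}\ket{\psi}^{\otimes k}_P$. The LU-invariance is then manifest, as the text notes, because $\ket{\Omega}$ is rotation-invariant and conjugation by $\tilde U$ leaves the vacuum expectation value unchanged.

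The main obstacle I anticipate is purely bookkeeping rather than conceptual: correctly accounting for the several normalization conventions in play at once --- the per-monomial factors $\sqrt{n_1!\cdots n_d!}$ hidden inside each $\tilde a^\dagger_{\vec n}$, the overall $N(\vec n)$ relating $\ket{\vec n}_A$ to $\ket{\vec n}$, and the fact that $\P^{(kn)}_{sym}$ is a \emph{normalized} projector whereas the raw symmetrization sum is not. The cleanest way to avoid errors is to check the prefactor on a single Fock state first, where every quantity is explicit, and then lift to general $f$ by linearity, rather than trying to carry the full multi-index sum through symmetrization in one step.
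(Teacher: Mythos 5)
Your overall skeleton is the same as the paper's: expand $f^{\dagger k}\ket{\Omega}$ into products of normalized monomials $\tilde a^\dagger_{\vec n^{(1)}}\cdots\tilde a^\dagger_{\vec n^{(k)}}\ket{\Omega}$, identify each such product with the global symmetrization of $\ket{\vec n^{(1)}}_P\otimes\cdots\otimes\ket{\vec n^{(k)}}_P$, fix the scalar, and then obtain \eqref{fkfk.projector} by taking the squared norm and using $\P^{(kn)}_{sym}=(\P^{(kn)}_{sym})^\dagger=(\P^{(kn)}_{sym})^2$ (that last step is correct, and is all the paper means by ``in particular''). The gap is in how you fix the scalar. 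Both sides of \eqref{eq:fk_and_symmetrization} are homogeneous of degree $k$ in the coefficients $\alpha_{\vec n}$, not linear in them, so verifying the identity when $f$ is a single monomial and then ``invoking linearity in the $\alpha_{\vec n}$'' is not a valid inference: a degree-$k$ identity that holds at every basis point need not hold on linear combinations (already for $k=2$, the quadratic form $q(x)=x_1x_2$ vanishes on all basis vectors without vanishing identically). The legitimate version of your reduction is polarization: both sides are diagonal restrictions of the multilinear maps $(f_1,\ldots,f_k)\mapsto f_1^\dagger\cdots f_k^\dagger\ket{\Omega}$ and $(\psi_1,\ldots,\psi_k)\mapsto\P^{(kn)}_{sym}\left(\ket{\psi_1}_P\otimes\cdots\otimes\ket{\psi_k}_P\right)$, so it suffices to check equality on all tuples of basis monomials --- but that means all \emph{mixed} tuples $(\vec n^{(1)},\ldots,\vec n^{(k)})$, not just the single diagonal tuple you propose to compute.

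For a mixed tuple, the statement to be checked is precisely the combinatorial heart of the paper's proof: the mode-side merging factor $M(\vec n^{(1)},\ldots,\vec n^{(k)})$ of \eqref{combinatorial.factor} must equal, up to the universal constant $\sqrt{(n!)^k/(kn)!}$, the particle-side ratio $N(\vec n^{(1)})\cdots N(\vec n^{(k)})/N(\vec n^{(1)}+\cdots+\vec n^{(k)})$ produced by collapsing the nested symmetrizations. That this ratio of tuple-dependent quantities is tuple-independent is exactly what the theorem asserts; a check on the diagonal tuple $(\vec n,\ldots,\vec n)$ cannot certify it. So you must either carry out the general-tuple computation (at which point the single-Fock-state check is superfluous), or justify fixing the constant by a single evaluation through a genuinely different argument, e.g.\ Schur's lemma: both maps above are $U(d)$-equivariant from $\left(\mathrm{Sym}^n(\C^d)\right)^{\otimes k}$ to $\mathrm{Sym}^{kn}(\C^d)$, and $\mathrm{Sym}^{kn}(\C^d)$ occurs with multiplicity one in that tensor power, so the two maps are automatically proportional and one evaluation (say, all $kn$ photons in mode $1$) determines the constant. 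As written, your proposal contains neither the general computation nor this representation-theoretic substitute, and the ``linearity'' step it relies on instead is false.
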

\begin{proof}
Let $\{\vi^{(1)},\cdots,\vi^{(k)}\}$ be $k$ multi-indices,
denoting photon count at each mode, i.e., for the vector with index $m$, we have 
\begin{equation}
\vi^{(m)} = \{n^{(m)}_1,\cdots,n^{(m)}_d\}.
\end{equation} 
Let us denote by $|\vi^{(m)}|=\sum_l n^{(m)}_l$ the total photon count. The monomial operator defined in \eqref{eq:multia}, $\tilde a^\dagger_{\vi^{(1)}+\cdots+\vi^{(k)}}$, can be written in terms of the individual normalized monomials as
\begin{equation}
\tilde a^\dagger_{\vi^{(1)}} \tilde a^\dagger_{\vi^{(2)}} \cdots \tilde a^\dagger_{\vi^{(k)}} = M(\vi^{(1)},\cdots,\vi^{(k)})\ \tilde a^\dagger_{\vi^{(1)}+\cdots+\vi^{(k)}}
\label{global.creator}
\end{equation}
where
\begin{equation}
M(\vi^{(1)},\cdots,\vi^{(k)})\equiv \prod_{l=1}^d
\sqrt{{ (n_{l}^{(1)}+\cdots+n_{l}^{(k)})!  \over (n_{l}^{(1)})! \cdots (n_{l}^{(k)})! }}
\label{combinatorial.factor}
\end{equation}
is the normalization factor. Let us express $f^{\dagger k}\ket{\Omega}$ as a sum of terms of this kind:
\begin{align}
  (f^\dagger)^k \ket{\Omega} &= \sum_{\vi^{(1)},\cdots,\vi^{(k)}} \alpha_{\vi^{(1)}} \cdots \alpha_{\vi^{(k)}}\; \nonumber \\
&\times M(\vi^{(1)},\cdots,\vi^{(k)}) \;\tilde a^\dagger_{\vi^{(1)}+\cdots+\vi^{(k)}} \ket{\Omega}
\label{eq:fk}
\end{align}
so, the coefficient for $\ket{\vec{I}}\equiv \tilde{a}^\dagger_{\vec{I}} \ket{\Omega}$ is
\begin{equation}
\sum_{\vi^{(1)}+\cdots+\vi^{(k)}=\vec I} \alpha_{\vi^{(1)}}\cdots \alpha_{\vi^{(k)}} \cdot M(\vi^{(1)},\cdots,\vi^{(k)})
\end{equation}
where $\vec I$ is a multi-index for $nk$ photons in $d$ modes.

Now, let us consider the right hand side of \eqref{eq:fk_and_symmetrization}. The tensor product $\ket{\psi}^{\otimes k}$ can be written as:
\begin{equation}
  \ket{\psi}_P^{\otimes k} =
  \sum_{\vi^{(1)},\cdots,\vi^{(k)}} 
  \alpha_{\vi^{(1)}}\cdots \alpha_{\vi^{(k)}}
  \ket{\vi^{(1)}}_P \otimes \cdots \otimes \ket{\vi^{(k)}}_P,
\label{eq:particle-power-k}
\end{equation}
Notice that the action of several partial projections on symmetric subspaces followed by a global projection on the symmetric subspace is equivalent to just the final global projection. Consequently,
\begin{align} 
&\P_{sym}^{kn} \left( \ket{\vi^{(1)}}_P\otimes\cdots\otimes\ket{\vi^{(k)}}_P
  \right) \nonumber \\
&= N(\vi^{(1)})\cdots N(\vi^{(k)}) \P^{(kn)}_{sym} \nonumber\\
&\left(
\P^{(n)}_{sym}(\ket{\vi^{(1)}}_A) \otimes \cdots \otimes
\P^{(n)}_{sym}(\ket{\vi^{(k)}}_A) \right) \\
&= N(\vi^{(1)})\cdots N(\vi^{(k)}) \P^{(kn)}_{sym} \left(
\ket{\vi^{(1)}+\cdots+\vi^{(k)}}_A \right) \nonumber \\
&= {N(\vi^{(1)})\cdots N(\vi^{(k)})\over N(\vi^{(1)}+\cdots+\vi^{(k)})}
\ket{\vi^{(1)}+\cdots+\vi^{(k)}} = \nonumber \\
&= \tfrac{\sqrt{(kn)!}}{\sqrt{(n!)^k}} M(\vi^{(1)},\cdots,\vi^{(k)}) \ket{\vi^{(1)}+\cdots+\vi^{(k)}}
\end{align}

Applying the above relations to \eqref{eq:particle-power-k} we get
\begin{align}
\P_{sym}^{kn} \ket{\psi}^{\otimes k}_P &=
\sum_{\vi^{(1)},\cdots,\vi^{(k)}}
\alpha_{\vi^{(1)}}\cdots \alpha_{\vi^{(k)}}
\tfrac{\sqrt{(kn)!}}{\sqrt{(n!)^k}}\\
&\times M(\vi^{(1)},\cdots,\vi^{(k)})
\ket{\vi^{(1)}+\cdots+\vi^{(k)}}\nonumber
\end{align}
which is a state proportional to \eqref{eq:fk}, with the proportionality factor $\sqrt{(kn)!/(n!)^k}$, thus we have shown \eqref{eq:fk_and_symmetrization}. 

\end{proof}

%%%%%%%%%%%%%%%%%%%%%%%%%%%%%%%%%%%%%%%%%%%%%%%%%%%%%%%%%%%%%%%%%%%

This tensor product symmetrization trick bears resemblance to the use of Clebsch-Gordan coefficients. Indeed, already for $k=2$ the result is useful: $\ket{\psi}^{\otimes 2}$ is not permutation-symmetric unless $\ket{\psi}=\ket{\phi}^{\otimes n}$ for some single-particle state $\ket{\phi}$.

It is possible to prepare an experimental setup to measure $\langle f^k f^{\dagger k} \rangle$. We have to prepare $k$ copies of the state and project each $k$-tuple of modes into their symmetric combination. For example, if $k=2$, two modes are symmetrized using a beam-splitter. Then, $\langle f^2 f^{\dagger 2}\rangle$ is the probability amplitude for losing no photons in the procedure. In general, taking copies of bosonic states and calculating projections offers a way to measure multi-particle entanglement, since taking $k$ copies provides a way to measure R\'enyi entropy of order $k$ of the given subsystems \cite{Daley2012}.

There is another interpretation of $\langle f^k f^{\dagger k} \rangle$ in polynomial language. The quantity we are investigating is known as the {\em Bombieri norm} of homogeneous polynomials \cite{Beauzamy1990} (in this case, $f^k$), which is known to be invariant under unitary rotations of the variables.
This quantity can be expressed as an integral of $|f(\vec{a})|^{2k}$ over the (complex) unit sphere $|\vec{a}|=1$, \cite{Pinasco2012, Pinasco2005} (equivalently, see \cite[Lemma 15]{Aaronson2010}, where it is called Fock Inner Product).

%%%%%%%%%%%%%%%%%%%%%%%%%%%%%%%%%%%%%%%%%%%%%%%%%%%%%%%%%%%%%%%%%%%%%%%%%%

\section{Tensor product in mode basis and an experimental recipe for $f^{\dagger k}$}
\label{s:mode-product}

In this section we study tensor product in the mode representation $\ket{\psi}_M^{k}$, which is different and more physically relevant than tensor product in the particle representation discussed in the previous section.

Furthermore, we provide experimentally-feasible way do directly measure the invariants $\bra{\Omega} f^k f^{\dagger k} \ket{\Omega}$, defined as in \eqref{fkfk}, as related to success-rate of creation of states $f^{\dagger k}\ket{\Omega}$ from $k$ copies of state $f^\dagger \ket{\Omega}$.

To start with, let us look at example of $n=3$ particles in $d=2$ modes, raised to power $k=2$
\begin{align}
&\left( \tfrac{1}{\sqrt{2}} (\ket{0,3}_M+\ket{2,1}_M) \right)^{\otimes 2}\\
&= \tfrac{1}{2}\left( \ket{0,3,0,3}_M + \ket{0,3,2,1}_M\right.\\
&\left.+ \ket{2,1,0,3}_M+\ket{2,1,2,1}_M \right).
\end{align}
This is a valid photon state (as permutation-symmetry of particles is built-in in the mode representation), of $6$ particles in $4$ modes. 

In general, raising a bosonic state to tensor power, in the mode representation, yields in $kn$ photons in $kd$ modes (not $kd$ particles in $d$ modes, as in the tensor power for particle representation).

Tensor product in mode representation has a direct physical interpretation. If we create $k$ optical tables the same setups, each one producing state $\ket{\psi}$, then $\ket{\psi}_M^{\otimes k}$ is the quantum state produced by the laboratory.
As we see, multiplying state also multiplies number of modes, as there is one more parameter related to the number of optical table.

The question is if it this product can be related to $f^{\dagger k}$ in some way?  The answer is positive.  This time instead of symmetrizing particles (as we did for $\ket{\psi}_P^{\otimes k}$) we need to reduce number of modes from $kd$ to $d$,
by performing some symmetrization of modes.

We can write
\begin{align}
\ket{\psi}_M^{\otimes k} &= f^\dagger(a_{(1,1)},\ldots,a_{(d,1)})\label{eq:fkmanymodes}\\
&\times f^\dagger(a_{(1,2)},\ldots,a_{(d,2)}) \times \ldots\\
&\times f^\dagger(a_{(1,k)},\ldots,a_{(d,k)})\ket{\Omega}.
\end{align}
That is, if we are taking a number of copies of a bosonic state, then we in fact multiply number of modes.
The second index is related to copy.

System is symmetrized with respect to particles inside mode, by construction.
To symmetrize among modes, we need to project it on symmetric combination of respective modes
\begin{align}
b_{(i,1)} = \frac{a_{(i, 1)} + \ldots + a_{(i, k)} }{\sqrt{k}},\label{eq:sym_modes}
\end{align}
where all $b_{(i,j)}$ need to be pairwise orthogonal. 
It can be realized with linear optics, as unitary rotation of modes.
In particular, we may employ Fourier transform (i.e. $\vec{b}_i = \mathcal{F} \vec{a}_i$ for each group of modes), and we are interested in the constant term.

When inverting Fourier transform, each mode can be expressed as a linear combination of $b_{(i,j)}$,
where states with different indices are orthogonal,
and weight of $b_{(i,1)}$ is always $1/\sqrt{k}$.

Consequently,
\begin{align}
&f^\dagger(a_{(1,j)}, \ldots, a_{(d,j)})\label{eq:symmetrization_of_variables}\\
= &f^\dagger \left(\tfrac{1}{\sqrt{k}} b_{(1,1)} + \text{orth.}, \ldots, \tfrac{1}{\sqrt{k}} b_{(d,1)} + \text{orth.} \right)\\
= & k^{-n/2} f^\dagger(b_{(1,1)}, \ldots, b_{(d,1)}) + \text{orth.},
\end{align}
where by \emph{orth.} we denote terms containing at least one $b_{(i,j\neq 1)}$.

Thus, by using \eqref{eq:symmetrization_of_variables} for every component of \eqref{eq:fkmanymodes} we get
\begin{align}
k^{-kn/2} f^{\dagger k}(b_{(1,1)}, \ldots, b_{(d,1)}) + \text{orth}.
\end{align}
\begin{figure}[!htbp]
	\centering
		\includegraphics[width=0.3\textwidth]{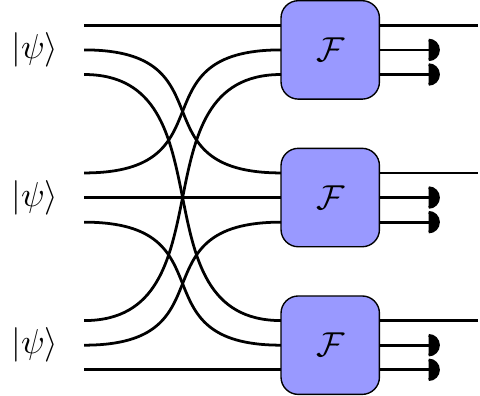}
	\caption{Experimental setup example for $d=3$ modes (and $\mathcal{F}$ operators) and $k=3$ copies (and outcome channels per operator).}
	\label{fig:fpower}
\end{figure}

Consequently, we have one more interpretation of $f^{\dagger k}\ket{\Omega}$.
It is the state you get when following the recipe, pictured in Fig.~\ref{fig:fpower}:
\begin{itemize}
\item Create $k$ copies of an $n$-photon state.
\item Perform interference on each group of respective modes.
\item Postselect results in which for each group of modes no photon was detected in non-first output mode.
\end{itemize}

Our probability to succeed is
\begin{equation}
\frac{\bra{\Omega} f^k f^{\dagger k} \ket{\Omega}}{k^{kn}}
\leq \frac{(kn)!}{(n!)^k k^{kn}}
\approx k^{-1/2} (2 \pi n)^{(1-k)/2},
\end{equation}
where the approximation is due to Stirling's approximation for $(kn)!$ and $n!$.
That is, invariant $\bra{\Omega} f^k f^{\dagger k} \ket{\Omega}$ can be measured experimentally, as statistic of no clicks in detectors, in the described setting.

For the simplest case of $n=1$, $d=1$ and $k=2$, the Fourier transform becomes%
\begin{align}
\mathcal{F} =
\left[%
\begin{matrix}
\tfrac{1}{2} & \tfrac{1}{2}\\
- \tfrac{1}{2} & \tfrac{1}{2}
\end{matrix}
\right].
\end{align}
and we get Hong-Ou-Mandel interference with post-selection, allowing us to produce state two photons in one mode $\ket{2,0}_M$ from two photons in two modes $\ket{1,1}_M$, with $50\%$ post-selection efficiency.

Moreover, a similar experimental scheme as above can be used to produce states of the from%
\begin{align}
f_1^\dagger \cdots f_k^\dagger \ket{\Omega},
\end{align}
where all $f_i^\dagger \ket{\Omega}$ are states of a fixed number of photons (perhaps different for each $i$).
The success rate is%
\begin{align}
\frac{\bra{\Omega} f_k \cdots f_1  f_1^\dagger \cdots f_k^\dagger  \ket{\Omega}}{k^{n_1 + \cdots + n_k}}.
\end{align}
This follows directly from \eqref{eq:symmetrization_of_variables} applied to a product of functions.

%%%%%%%%%%%%%%%%%%%%%%%%%%%%%%%%%%%%%%%%%%%%%%%%%%%%%%%%%%%%%%%%%%%%%%%%%%

\section{Examples at work}
\label{s:examples}

The previous two sections have introduced two sets of LU-invariants for $n$-photon states in $d$-modes. The question to be addressed in this section is the following: can those invariants help us determine the LU-equivalence classes of relevant states? We will start our discussion with a benchmark problem, which can be solved in many different ways: $n=2$ photons in $d=2$ modes. Then, we will proceed to the case of $n=3$ particles, still in $d=2$ modes, which is the first non-trivial case, although it is well understood. We will show that, in that case, the right number of polynomial invariants is recovered. Our last example is a much more complicated system: $n=4$ photons in $d=8$ modes with some additional symmetries.

\subsection{2 particles in 2 modes}

The simplest example is $n=2$ particles in $d=2$ modes:
\begin{align}
f = \alpha_{20} \tfrac{a_1^{2}}{\sqrt{2}} + \alpha_{11} a_1 a_2 + \alpha_{02} \tfrac{a_2^{2}}{\sqrt{2}}.
\end{align}
There is just a single invariant. Let us study how we can obtain it using the methods described in this paper. In our case it suffices to look at a block of $k=1$ particles:
\begin{align}
\left[%
\begin{matrix}
3|\alpha_{20}|^2 + 2 |\alpha_{11}|^2 + |\alpha_{02}|^2 & 
\sqrt{2}(\alpha_{20}^\star \alpha_{11} + \alpha_{11}^\star \alpha_{02}) \\
\sqrt{2}(\alpha_{20} \alpha_{11}^\star + \alpha_{11} \alpha_{02}^\star) & 
|\alpha_{20}|^2 + 2 |\alpha_{11}|^2 + 3|\alpha_{02}|^2 
\end{matrix}
\right]
\end{align}
Its characteristic polynomial is
\begin{equation}
w_2(\lambda) = \lambda^2
- \Tr \left( f f^\dagger|_{k=1} \right) \lambda
+ \det \left( f f^\dagger|_{k=1} \right),
\end{equation}
where coefficients are
\begin{align}
\Tr \left( f f^\dagger|_{k=1} \right)
&= 4\left(|\alpha_{20}|^2 + |\alpha_{11}|^2 + |\alpha_{02}|^2\right),
\nonumber \\ 
\det \left( f f^\dagger|_{k=1} \right)
&= 4 \left(|\alpha_{20}|^2 + |\alpha_{11}|^2 +
|\alpha_{02}|^2\right)^2 \nonumber \\ 
- ( |\alpha_{20}|^2 &- |\alpha_{02}|^2 )^2 + 2 |\alpha_{20}^\star \alpha_{11} +
\alpha_{11}^\star \alpha_{02}|^2. 
\label{eq:twotwodet} 
\end{align}
The trace gives only the normalization, which is the same information contained in $f f^\dagger|_{k=0}$, and which we can set to $1$. The determinant, on the other hand, gives a new invariant.

Alternatively, we can factorize the (degree 2) polynomial: $f=f_1f_2$. In other terms, we can make use of the Majorana stellar representation:
\begin{align}
\ket{\psi} &= \frac{1}{\sqrt{N}}f_1^\dagger f_2^\dagger
\ket{\Omega}\\ & = \frac{1}{\sqrt{2N}}\left(
\ket{\phi_1}_P \otimes \ket{\phi_2}_P
+ \ket{\phi_2}_P \otimes \ket{\phi_1}_P \right),
\end{align}
where $\sqrt{N}$ is a normalization factor and $f_i^\dagger \ket{\Omega} \equiv \ket{\phi_i}$. Since $U\in SU(2)$ acts on the representation as a simultaneous rotation of the points, for two particles the only invariant is the angle between the states, or equivalently $|\braket{\phi_1}{\phi_2}|^2$. A straightforward (albeit tedious) calculation gives
\begin{align}
|\braket{\phi_1}{\phi_2}|^2 = 
\frac{|\alpha_{20}|^2 + |\alpha_{11}|^2 + |\alpha_{02}|^2 - |\alpha_{11}^2 - 2 \alpha_{20} \alpha_{02}|}
{|\alpha_{20}|^2 + |\alpha_{11}|^2 + |\alpha_{02}|^2 + |\alpha_{11}^2 - 2 \alpha_{20} \alpha_{02}|}.
\end{align}
Along with the normalization condition it yields the invariant
\begin{align}
|\alpha_{11}^2 - 2 \alpha_{20} \alpha_{02}|^2 
%&= |\alpha_{11}|^4 + 2\left(\alpha_{11}^{2} \alpha_{20}^\star \alpha_{02}^\star + 
%\alpha_{11}^{\star 2} \alpha_{20} \alpha_{02} \right) \nonumber \\
%&+ 4 |\alpha_{20}|^2 |\alpha_{02}|^2 \nonumber \\
&= 3  - \det \left( f f^\dagger|_{k=1} \right).
\end{align}
The above is $0$ and $1$ for orthogonal and parallel vectors $\ket{\phi_i}$, respectively.

It is also possible to find the $\langle f^k f^{\dagger k}\rangle$ invariants associated to $k$ copies. For $k=2$ we obtain:
\begin{align}
\tfrac{2^2}{4!}\bra{\Omega} f^2 f^{\dagger 2} \ket{\Omega} = 1 -
\tfrac{1}{3}|\alpha_{11}^2 - 2 \alpha_{20} \alpha_{02}|^2.
% |\alpha_{20}|^2 +2 |\alpha_{20}|^2 |\alpha_{11}|^2 +2 |\alpha_{11}|^2 |\alpha_{02}|^2 + |\alpha_{02}|^4 + \tfrac{2}{3}|\alpha_{11}^2+\alpha_{20} \alpha_{11}|^2\\
\end{align}
In particular, for each orbit under linear optics, we can give a representative, for example
\begin{align}
\frac{\cos(\theta)}{\sqrt{2}} a_1^2 + \frac{\sin(\theta)}{\sqrt{2}} a_2^2 
\end{align}
for $\theta\in[0,\frac{\pi}{4})$.

\subsection{Three qubits}
\label{s:three-qubits}

The case of $n=3$ photons in $d=2$ modes can be viewed as three qubits in a permutation-symmetric state, and is more involved. A full list of invariants is listed in \cite{Sudbery2000}. Disregarding mirror-reflection (i.e.: anti-unitary operators) there are 6 invariants, which reduce to 4 when we take into account normalization and permutation-symmetry. A normal form can be employed \cite{acin2000generalized, Acin2001a, Carteret2000} which, when particularized to a permutation-symmetric state, gives
\begin{align}
\ket{\psi} = &p \left(\ket{001}_P+\ket{010}_P+\ket{100}_P\right)/\sqrt{3} \nonumber\\ +
  &q \ket{111}_P + r \exp(i \varphi) \ket{000}_P,
\label{eq:acin3form}
\end{align}
where all parameters ($p$, $q$, $r$, $\varphi$) are real.
In this section, we use modes $\{0,1\}$, which are more prevalent in description of qubits, $\{1,2\}$ (in most of this paper we start enumeration from $1$).
Or, in polynomial notation:
\begin{align}
f = \alpha_{30} \tfrac{a_0^3}{\sqrt{6}}
  + \alpha_{21} \tfrac{a_0^2 a_1}{\sqrt{2}}
  + \alpha_{03} \tfrac{a_1^3}{\sqrt{6}},
\end{align}
where $\alpha_{30}$ is complex and both $\alpha_{21}$ and $\alpha_{03}$ are real parameters.

Our main result is that both the set of moments $\langle f^k f^{\dagger k}\rangle$ with $k\leq 5$ and the characteristic polynomials of the blocks $ff^\dagger|_{k\leq 2}$ {\em provide all invariants}. This can be checked by computing the matrix of partial derivatives of these invariants with respect to the parameters determining state \eqref{eq:acin3form} at, e.g., the point $(p=q=r=1, \varphi=\pi/4)$, and observing that is has maximal rank.

This result implies that blocks of $ff^\dagger$ convey more information than reduced density matrices, which are known to provide only 2 invariants, including the normalization
(note that for $1$ qubits spectra of one-particle and two-particle reduced density matrix are the same).

Beyond this dimensionality test, it is relevant to test whether those invariants can distinguish between states related by complex conjugation (or reflection, in terms of the Majorana representation), i.e.: $\ket{\psi}$ and $\ket{\psi}^*$. In general, for $n \geq 3$, such states do not need to be related by a unitary transformation
(as, in the Majorana representation, 3 indistinguishable unit vectors need tot to have mirror symmetry).
Unfortunately, neither moments nor block spectra can distinguish a state from its complex conjugate (as we already noted in Sec. \ref{s:block-decomposition}).

\subsection{Four-particle singlet state}

As a more interesting example we consider $n=4$ photons in $d=8$ modes, composing four qubits whose singlet-subspace determines a logical qubit, see Fig.~\ref{fig:4photons8modes}. There are three Hilbert spaces that are relevant for this scenario: the total Hilbert space $\mathcal{S}^8_4$, the 4-qubit subspace $\mathcal{H}_4$, and the two-dimensional singlet subspace $\mathcal{H}_s$, which determines the logical qubit, structured by the following inclusions:
\begin{equation}
\mathcal{S}^8_4 \supset \mathcal{H}_4 \supset \mathcal{H}_s \ .
\label{eq:singlet-subsubspace}
\end{equation}

We address here the following natural question: starting with a particular singlet state $\ket{\psi}\in\mathcal{H}_s$, which singlet states (also in $\mathcal{H}_s$) can be obtained from it using only linear optics? Before proceeding further, let us first describe the details of the construction of the 4-qubit and the singlet subspaces of $\mathcal{S}^8_4$.  

Let us denote the by $\{a_i,b_i\}_{i=1}^4$ the four annihilation operators required to span $\mathcal{S}^8_4$, where the $a_i$ refer to horizontal and the $b_i$ to vertical polarizations of the $i$-th beam. We define the 4-qubit subspace $\mathcal{H}_{4}\subset \mathcal{S}^{8}_4$, as a subspace spanned by states that have exactly one particle in each of the four pairs of modes: $(a_i,b_i)$. This subspace has dimension 16 and is isomorphic to the Hilbert space of four distinguishable qubits $\left(\C\right)^{\otimes 4}$. Action of the local unitary group $SU(2)^{\otimes 4}$ on $\mathcal{H}_4$ is modeled by the action of global linear optics operations that do not mix pairs $(a_i,b_i)$. The two-dimensional singlet subspace, $\mathcal{H}_s$, is defined as the subspace of $\mathcal{H}_4$, which is invariant under the action of any collective unitary rotations on all four qubits, $V^{\otimes 4}$. 

The above construction was first introduced in \cite{Zanardi1997} as the simplest example of a decoherence-free subspace for collective rotations, and it has been created experimentally \cite{Weinfurter2001}. In \cite{Migdal2011dfs} it was shown that the logical qubit is immune to one-particle loss and a protocol for quantum key distribution using such states and linear optics was provided.
\begin{figure}[!htbp]
\centering
\includegraphics[width=0.35\textwidth]{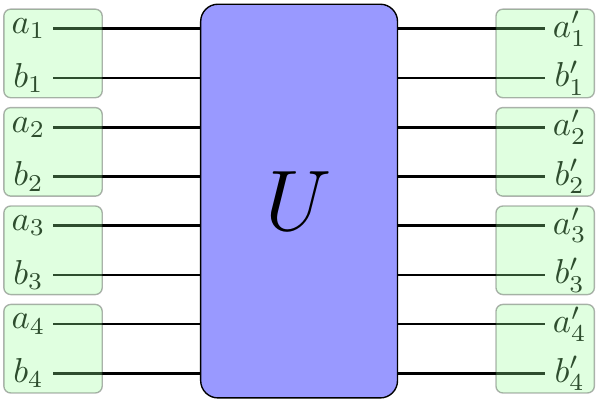}
\caption{Linear transformations for a state with 4 photons distributed among 8 modes, $\mathcal{S}^8_4$. We consider states having exactly one photon in each pair of nodes (denoted by green boxes).
This subspace is equivalent to the Hilbert space of 4 distinguishable particles, $\mathcal{H}_4$.
Furthermore, we study singlet states, i.e. states that are invariant with respect to $U=V^{\otimes 4}$, for all unitary $V$, where each $V$ acts on the respective pair of modes.}
\label{fig:4photons8modes}
\end{figure}

Let us describe the structure of the singlet space in the mode description. For each pair of beams we can define the two-photon {\em singlet} state:
\begin{align}
s_{12} = \left( a_1 b_2 - b_1 a_2 \right) / \sqrt{2},
\end{align}
i.e. $s^\dagger_{12}\ket{\Omega} = \left( \ket{HV} - \ket{VH} \right)/\sqrt{2}$, where $\ket{H}$ and $\ket{V}$ stand for horizontal and vertical polarization, respectively. Those two-photon singlet states can be paired in three inequivalent ways in order to build a global $n=4$ state:
\begin{align}
s_{12}s_{34}, \quad s_{13}s_{42}, \quad s_{14}s_{23}.
\label{eq:pairs_of_pairs}
\end{align}
These three states are not orthogonal, since they span a two-dimensional subspace. In fact, the ordering of particles in $s_{13}s_{42}$ was selected so that the scalar product between each pair is $-1/2$. To form an orthogonal basis, we prepare two linear combinations of them, resembling circular polarization states:
\begin{align}
l &= \tfrac{\sqrt{2}}{3}(s_{12}s_{34} + \epsilon s_{13}s_{42} + \epsilon^2 s_{14}s_{23})\label{eq:singlet_lr_def}\\
r &= \tfrac{\sqrt{2}}{3}(s_{12}s_{34} + \epsilon^2 s_{13}s_{42} + \epsilon s_{14}s_{23}),
\end{align}
where $\epsilon = \exp(i 2\pi/3)$.

Let us introduce the following parametrization for our state
\begin{align}
f = \cos(\tfrac{\theta}{2}) l + \sin(\tfrac{\theta}{2}) e^{i\varphi} r,
\label{eq:singlet_lr_basis}
\end{align}
where $\theta\in[0,\pi)$ and $\varphi\in[0,2\pi)$, so that we can absorb the sign in $\theta$. As it is a logical qubit (i.e. a two dimensional Hilbert space), it can be represented on the Bloch sphere, see Fig.~\ref{fig:2_dims_singlet_geometry}.
\begin{figure}[!htbp]
\centering
\includegraphics[width=0.40\textwidth]{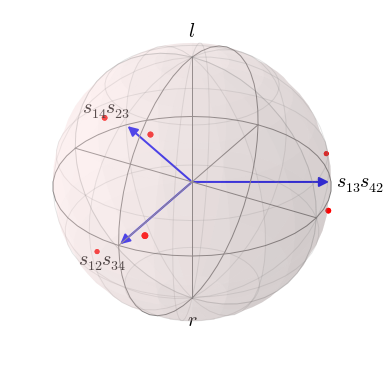}
\caption{Arrows stand for $s_{12}s_{34}$, $s_{13}s_{42}$ and $s_{14}s_{23}$. On the poles there are $l$ and $r$ states, as defined in \eqref{eq:singlet_lr_def}. Points represent a single state subjected to action related to all permutations of pairs of modes modes.}
\label{fig:2_dims_singlet_geometry}
\end{figure}

Now let us compute the moments up to a few copies:
\begin{align}
\langle f^{2} f^{\dagger 2} \rangle &= \tfrac{17}{2} - \tfrac{1}{2}
\cos(2\theta), \nonumber \\
\langle f^{3} f^{\dagger 3} \rangle &= 290 - 42 \cos(2\theta) - 8
\sin^3(\theta) \cos(3 \varphi), 
\end{align}
as a side note, the normalization factors (as in \eqref{fkfk.projector}) are $1/70$ and $1/34650$, respectively. I.e.: the states are very far from being coherent. 

Consequently, we obtain two invariants:
\begin{align}
\cos(2\theta) \quad \text{and} \quad \cos(3 \varphi).
\label{eq:singlet_invariants}
\end{align}
This results restricts the allowed operations within linear optics.
If we restrict ourselves further, only to operations preserving the singlet subspace, then the only possible operations,
in the Bloch representation
(see Fig.~\ref{fig:2_dims_singlet_geometry}) are: rotation along the equator by $2\pi/3$ and $4\pi/3$, rotation around states \eqref{eq:pairs_of_pairs} by $\pi$, and mirror reflection with respect the equatorial plane. In particular, there are no continuous allowed transformations \cite{Wasilewski2007} for such singlet states.
Let show how to implement all those operations, with the exception of the mirror reflection.

What are the possible operations which hold the state within the singlet subspace? Of course, different parings can be interchanged by permuting beams. For example, $(2 \leftrightarrow 3)$ changes $s_{12}s_{34}$ into $-s_{13}s_{24}$ (and the same changing $(1 \leftrightarrow 4)$). Exchange of any two particles acting on any of the three two-singlet parings produces a state with a minus sign. Thus, permuting particles preserves the singlet subspace. 

The group of permutations of $4$ particles has $24$ elements, which can be generated by two-particle swaps:
\begin{align}
(1 \leftrightarrow 2) \text{ or } (3 \leftrightarrow 4):\quad & 
&l &\mapsto - r,\quad
&r &\mapsto - l\\
(1 \leftrightarrow 3) \text{ or } (2 \leftrightarrow 4):\quad & 
&l &\mapsto - \epsilon^2  r,\quad
&r &\mapsto - \epsilon  l\\
(1 \leftrightarrow 4) \text{ or } (2 \leftrightarrow 3): \quad & 
&l &\mapsto - \epsilon r,\quad
&r &\mapsto - \epsilon^2 l,
\end{align}
which can be checked directly by permuting particles in \eqref{eq:singlet_lr_def}. On the Bloch sphere, they are just rotations by $\pi$ around one of the states \eqref{eq:pairs_of_pairs}.

Composition of two permutations allows us to reach cyclic permutations of the three particles, e.g. ($1 \rightarrow 2 \rightarrow 3 \rightarrow 1$). It turns out that such permutations result in $\phi \mapsto \phi + 2\pi/3$ and $\phi \mapsto \phi + 4\pi/3$.

Thus we reached all operations unitary operations allowed by \eqref{eq:singlet_invariants}, with one exception. It does not cover antiunitary operations (reflections on Bloch sphere $\theta \mapsto \pi -\theta$). Thus, it is still possible that there are linear operations not preserving the singlet subspace that map some states into their complex conjugates. Nonetheless, this computation provides the most systematic study of the geometry of the simplest singlet qubit state implemented with photons, to the best of the authors' knowledge.

Alternatively, we can use the spectrum of $f f^\dagger|_k$ for different values of $k$. It suffices to check the two-particles block, i.e. $f f^\dagger |_2$, which is a $36\times 36$ matrix. The highest degree terms of its characteristic polynomial read:
\begin{align}
w_2(\lambda) &= \lambda^{36}\\
&- \lambda^{35} \tfrac{1}{4}\left( 17139 \cos(2 \theta) \right)\nonumber\\
&+ \lambda^{34} \tfrac{1}{72}\left( 9084959 + 1605 \cos(2 \theta) \right.\nonumber\\
&+ \left. 4 \cos(3 \varphi) \sin^3(\theta) \right) - \ldots,\nonumber
\end{align}
which yield the same invariants as the moments.

\section{Conclusion and future work}\label{s:conclusion}

In this paper we analyzed the problem of which states with a fixed number of photons $n$ in $d$ modes can be related using only linear optics. This problem may be mathematically formulated in terms of which homogeneous polynomials of degree $n$ in $d$ complex variables may be related by a unitary transformation between them.

Our proposal has been to study two kinds of invariants. Both are based on the global creation operator, which creates the state, $\ket{\psi}=f^\dagger(\vec a)\ket{\Omega}$, which can be written as a homogeneous polynomial on the creation operators for each mode. The first set of the invariants is just the spectrum of the operator $f f^\dagger$. The second one is the set of moments of the form $\bra{\Omega}f^k f^{\dagger k} \ket{\Omega}$. This second set of invariants can receive a physical interpretation, since they are related to the probability of not losing particles when $k$ copies of the original state are prepared, and the symmetric channel is post-selected.

The main open question is whether our invariants are fine-grained enough to ensure that if two multiphoton states have the same invariants, they can be connected with linear optics and complex conjugation.
We have computed the invariants for a variety of situations, and found that they provide a complete characterization of the equivalence classes in all of them.
However, this question is not yet answered in the general case. 

Regarding future work, we would like to make the following remarks. First of all, a proof that these invariants provide a full characterization would be very desirable. Or, alternatively, a counterexample, which would lead us to find better invariants.
Second, both methods can be applied for fermions with no modifications beyond changing bosonic by fermionic operators. It deserves investigation whether this method provides new invariants in that case, or whether it simplifies the derivation of already known ones.
A third line of future research will be to extend our results to mixed states, or states without a fixed number of particles. In this last case, moments can still be used,
but the spectral method becomes impractical (as $ff^\dagger$ not longer can be decomposed into blocks). 
But perhaps the most practical open question is: if two multiphoton states $\ket{\phi_1}$ and $\ket{\phi_2}$ can not be related using only linear optics, what is the maximal efficiency for obtaining $\ket{\phi_2}$ out of $\ket{\phi_1}$ using linear optics {\em and} post-selection?

As a final remark, the Bloch sphere plot
(Fig.~\ref{fig:2_dims_singlet_geometry}) was created using the open
source package \emph{QuTiP} \cite{Johansson2012a}, to which one of the
authors (P.M.) has contributed.

\subsection*{Acknowledgments}

P.M. would like to thank David Bar-Moshe \cite{PhysicsSE_BarMoshe2011}, Anton Petrunin \cite{MO_Petrunin2012}, Karol Życzkowski, Antonio Ac\'in and Michał Studziński for fruitful discussions
and acknowledges the Spanish MINCIN/MINECO project TOQATA (FIS2008-00784), EU Integrated Projects AQUTE and SIQS, CHISTERA project DIQUIP, ERC AdG OSYRIS
and the John Templeton Foundation. 
M.O. gratefully acknowledges the support of SFB-TR12 program financed by Deutsche Forschungsgemeinschaft.
J.R.-L. acknowledges grant FIS2012-33642 from the Spanish government.

%%%%%%%%%%%%%%%%%%%%%%%%%%%%%%%%%%%%%%%%%%%%%%%%%%%%%%%%%%%%%%%%%%%%%%%%%%%

%merlin.mbs apsrev4-1.bst 2010-07-25 4.21a (PWD, AO, DPC) hacked
%Control: key (0)
%Control: author (0) dotless jnrlst
%Control: editor formatted (1) identically to author
%Control: production of article title (0) allowed
%Control: page (1) range
%Control: year (0) verbatim
%Control: production of eprint (0) enabled
%

\appendix

\section{Equivalence of states in Majorana representation}
\label{app:majorana}

We show how to decide whether two symmetric $n$-qubit states are equivalent under linear operations.

First, we apply the Majorana's stellar representation to both states, resulting in two sets of vectors, $\{\vec{v}_i\}_{i\in \{1,\ldots,n\}}$ and $\{\vec{u}_i\}_{i\in \{1,\ldots,n\}}$. They may differ by a rotation (i.e. an element of SO(3)) and permutation.

Let us take an ordered pair of two non-parallel vectors $(\vec{u}_1, \vec{u}_2)$.  Then for every ordered pair from the first set $(\vec{v}_i, \vec{v}_j)$ for $i \neq j$, if their scalar products match ($\vec{v}_i\cdot\vec{v}_j=\vec{u}_1\cdot\vec{u}_2$) we explicitly construct a unique rotation that rotates the first pair into the second.

Then we check if such rotation rotates all $\vec{v}_i$ into a distinct vector $\vec{u}_{\sigma{i}}$. If it does, states are equivalent. If for all pairs it does not --- there is not.

As number of ordered pairs of two different vectors is $n^2-n$, the algorithm's complexity is maximal of $n^2$ and complexity of an algorithm for factorization of an $n$-degree polynomial of one variable (to get the Majorana's stellar representation).

\section{Schwinger representation of symmetric operators}\label{app:schwinger}

Below we show that \eqref{eq:pauli_sym} and \eqref{eq:schwinger_form} are the same on permutation-symmetric states.

\subsection{Auxiliary notation}

Let us introduce the following notation:
\begin{align}
a_\mu^\dagger &=
\frac{1}{\sqrt{n+1}}
\sum_{i=0}^{n} \ket{\mu}_i\label{eq:adag_alt_def}\\
a_\mu &=
\frac{1}{\sqrt{n}}
\sum_{i=0}^{n-1} \bra{\mu}_i,
\end{align}
where $\ket{\mu}_i$ means {\it insert $\ket{\mu}$ between $i$-th and $(i+1)$-th particle}, whereas $\bra{\mu}_i$ removes $i$-th particle.
The $n$ is the total number of particles in the state it is acting on.
We show that this notation is consistent, i.e. the left hand sides of \eqref{eq:adag_alt_def} act like creation and annihilation operators, respectively.
However, the right hand side can be applied on any state, not only a permutation symmetric one.

For example:
\begin{align}
&\left( \sum_{i=0}^2 \ket{2}_i \right) \ket{01}_P\\
&= \left( \ket{2}_0 + \ket{2}_1 + \ket{2}_2 \right) \ket{01}_P\\
&= \ket{201}_P + \ket{021}_P + \ket{012}_P
\end{align}
and
\begin{align}
&\left( \sum_{i=0}^2 \bra{2}_i \right) \ket{201}_P\\
&= \left( \bra{2}_0 + \bra{2}_1 + \bra{2}_2 \right) \ket{201}_P\\
&= \braket{2}{2} \ket{01}_P + \braket{2}{0} \ket{21}_P + \braket{1}{2} \ket{20}_P\\
&= \ket{01}_P.
\end{align}

A straightforward check on $n$-particle permutation-symmetric states Dicke state show that this (abuse of) notation makes sense.
That is, let us check that:
\begin{align}
a_\mu^\dagger {\tilde a}^\dagger_{\vec{n}} \ket{\Omega}
&= \left( \frac{1}{\sqrt{n+1}} \sum_{i=0}^n \ket{\mu}_i \right) \ket{\vec{n}},\\
a_\mu {\tilde a}^\dagger_{\vec{n}} \ket{\Omega}
&= \left( \frac{1}{\sqrt{n}} \sum_{i=0}^{n-1} \bra{\mu}_i \right) \ket{\vec{n}}.
\end{align}
We proceed by writing a state in particle representation as in \eqref{eq:fock_in_particle_representation}.
For the convenience, without the loss of generality, let us pick $\mu=1$,%
\begin{align}
&\sqrt{n+1} a_1^\dagger \sqrt{\frac{n!}{n_1! \cdots n_d!}}
\ket{n_1, \cdots, n_d}\\
&= \left( \sum_{i=0}^{n} \ket{1}_i \right)
\left( \ket{1}^{n_1}_P \cdots \ket{d}^{n_d}_P
+ \text{perm.} \right)\\
&= (n_1 + 1)
\left( \ket{1}^{n_1+1}_P \cdots \ket{d}^{n_d}_P
+ \text{perm.} \right)\\
&= (n_1 + 1) \sqrt{\frac{(n+1)!}{(n_1+1)! \cdots n_d!}}
\ket{n_1+1, \cdots, n_d},
\end{align}
where \emph{perm.} means inequivalent permutations.
Factor $(n_1+1)$ in the third line comes from%
\begin{equation}
(n + 1) \frac{n!}{n_1! \cdots n_d!}
\big/ \frac{(n+1)!}{(n_1+1)! \cdots n_d!},
\end{equation}
that is, putting $n+1$ particles and comparing number of inequivalent terms in permutation, for the initial and final state.

And analogously for annihilation:%
\begin{align}
&\sqrt{n} a_1 \sqrt{\frac{n!}{n_1! \cdots n_d!}}
\ket{n_1, \cdots, n_d}\\
&= \left( \sum_{i=0}^{n-1} \bra{1}_i \right)
\left( \ket{1}^{n_1}_P \cdots \ket{d}^{n_d}_P
+ \text{perm.} \right)\\
&=  n
\left( \ket{1}^{n_1-1}_P \cdots \ket{d}^{n_d}_P
+ \text{perm.} \right)\\
&= n \sqrt{\frac{(n-1)!}{(n_1-1)! \cdots n_d!}}
\ket{n_1-1, \cdots, n_d}.
\end{align}
This time $n$ in the third line comes from%
\begin{equation}
n_1 \frac{n!}{n_1! \cdots n_d!}
\big/ \frac{(n-1)!}{(n_1-1)! \cdots n_d!}.
\end{equation}

\subsection{Proof}

We start the proof with the following observation.
When we remove a particle from a symmetric state,
there result does not depend which one
(state of all other particles always permutation symmetric).
That is%
\begin{align}
\bra{\mu}_i \ket{\psi}
= \bra{\mu}_j \ket{\psi}
= \frac{1}{n} \left( \sum_{i=0}^{n-1} \bra{\mu}_i \right) \ket{\psi},\label{eq:annihilation_in_one_place}
\end{align}
where the last equality is a consequence of the former (for an $n$-particle state).

Consequently, when acting on $n$-particle symmetric state we get,
we write subsequent annihilation and creation operators as a single sum:
\begin{align}
&a_{\mu_1}^\dagger \cdots a_{\mu_k}^\dagger
a_{\nu_k} \cdots a_{\nu_1} \ket{\psi}\\
= &\frac{(n-k)!}{n!} 
\sum_{i_1,\ldots,i_k} \sum_{j_1,\ldots,j_k}\\
&\Big(
\ket{\mu_1}_{i_1} \cdots \ket{\mu_k}_{i_k}
\bra{\nu_k}_{j_k} \cdots \bra{\nu_1}_{j_1}
\Big) \ket{\psi}\label{eq:aiaj2aiai}\\
= &\left( \sum_{i_1,\ldots,i_k}
\ket{\mu_1}_{i_1} \cdots \ket{\mu_k}_{i_k}
\bra{\nu_k}_{i_k} \cdots \bra{\nu_1}_{i_1}
\right) \ket{\psi}, 
\end{align}
where instead of the sum over $j_1,\ldots,j_k$ we put $j_p=i_p$ using \eqref{eq:annihilation_in_one_place}.

Note that as creation and annihilation operations add and subtract particles (respectively), indices in a product do refer to different set of particles and need to be carried out iteratively.
That is, summation over $j_p$ goes from $j_p=0$ to $n-p$.

We need to show one more thing:%
\begin{align}
&\left( \sum_{i_1,\ldots,i_k}
\ket{\mu_1}_{i_1} \cdots \ket{\mu_k}_{i_k}
\bra{\nu_k}_{i_k} \cdots \bra{\nu_1}_{i_1}
\right) \ket{\psi}
\label{eq:changing2fixed_order}\\
=
&\left( \sum_{\text{p.d. }l_1,\ldots,l_k}
\ket{\mu_1}_{l_1}\bra{\nu_1}_{l_1}
\cdots \ket{\mu_k}_{l_k}\bra{\nu_k}_{l_k}
\right) \ket{\psi},
\end{align}
where by \emph{p.d.} we mean pairwise different.
In fact the only thing we need to do is to relabel each component of the sum. 
In the first line $i_p\in{0,\ldots, n-p}$, while in the second --- $l_p\in{0,\ldots, n-1}$ but disallow repetitions.
If in the first line we relabel in such a way that we don't forget about particles that we removed with $\bra{\nu_1}_{i_p}$, then we get $l_p$.

When we combine \eqref{eq:aiaj2aiai} with \eqref{eq:changing2fixed_order} we get an important relation%
\begin{align}
&a_{\mu_1}^\dagger \cdots a_{\mu_k}^\dagger
a_{\nu_k} \cdots a_{\nu_1} \ket{\psi}
\label{eq:normal_order_and_symmetric_operators}\\
=&\left( \sum_{\text{p.d. }l_1,\ldots,l_k}
\ket{\mu_1}_{l_1}\bra{\nu_1}_{l_1}
\cdots \ket{\mu_k}_{l_k}\bra{\nu_k}_{l_k}
\right) \ket{\psi}.
\end{align}

After showing relation \eqref{eq:normal_order_and_symmetric_operators}, we proceed to the main part of the proof.
Any symmetrized product of matrices is multilinear in their matrix entries, defined by $((\mu_1, \nu_1),\ldots,(\mu_n,\nu_n))$, where each $\mu_i$ (and $\nu_i$) is in $\{0,\ldots, d-1\}$, that is%
\begin{equation}
\sum_{\vec\imath \in \sigma(\{1,\ldots,n\})}
\ket{\mu_1}_{i_1}\bra{\nu_1}_{i_1}\ldots \ket{\mu_n}_{i_n}\bra{\nu_n}_{i_n}.
\end{equation}
So we need to show that for a sum of distinct matrix elements give the corresponding normally ordered operators.
When we apply \eqref{eq:normal_order_and_symmetric_operators}, we get
\begin{equation}
: a_{\mu_1}^\dagger a_{\nu_1} \ldots a_{\mu_n}^\dagger a_{\nu_n} :,
\label{eq:all-normally}
\end{equation}
what completes the proof.

Bear in mind that in \eqref{eq:all-normally} we get $n$ creation and annihilation operators, regardless of the multi-particle operator we want to use.

When we use only a $k$-particle operator, the formula can be simplified, what we show in the examples.

\subsection{Examples}

Below, for the clarity, we will work with qubits and use $a$ and $b$ for the annihilation operators of $\ket{0}$ and $\ket{1}$, respectively.

First, we see that%
\begin{align}
\sum_{i=1}^n \sigma^x_i &= a^\dagger b + b^\dagger a\\
\sum_{i=1}^n \sigma^y_i &= -ia^\dagger  b + i b^\dagger a\\
\sum_{i=1}^n \sigma^z_i &= a^\dagger a - b^\dagger b,
\end{align}
which is the standard Schwinger representation of operators for symmetric states, where we directly applied \eqref{eq:aiaj2aiai}, e.g. for symmetrized $\sigma^y$%
\begin{align}
\sum_{j=1}^n \sigma^y_j &= 
\sum_{j=1}^n \left( -i \ket{0}_j\bra{1}_j + i \ket{1}_j\bra{0}_j \right)\\
&= -ia^\dagger  b + i b^\dagger a.
\end{align}

Now, let us look at symmetrized product of two operators, e.g. $\sigma^x_i$ and $\sigma^z_j$:%
\begin{align}
&\sum_{i \neq j} \sigma^x_i \otimes \sigma^z_j\label{eq:schwinger_square}\\
&= \sum_{i\neq j}
\left( \ket{0}_i \bra{1}_i + \ket{1}_i \bra{0}_i \right)
\left( \ket{0}_j \bra{0}_j - \ket{1}_j \bra{1}_j \right)\\
&= \sum_{i\neq j}
\left(
\ket{0}_i \bra{1}_i \ket{0}_j \bra{0}_j 
- \ket{0}_i \bra{1}_i \ket{1}_j \bra{1}_j \right.\\
&\phantom{=\sum_{i\neq j}(}\left.
+ \ket{1}_i \bra{0}_i \ket{0}_j \bra{0}_j 
- \ket{1}_i \bra{0}_i \ket{1}_j \bra{1}_j
\right)\\
&= \left(
a^{\dagger 2} a b 
- a^\dagger b^\dagger b^2
+ a^\dagger b^\dagger a^2
- b^{\dagger 2} a b
\right)\\
&= : 
\left( a^\dagger b + b^\dagger a \right)
\left( a^\dagger a - b^\dagger b \right) :
\end{align}
were we applied \eqref{eq:normal_order_and_symmetric_operators} to change summation to creation and annihilation operators.


\begin{thebibliography}{45}%
\makeatletter
\providecommand \@ifxundefined [1]{%
 \@ifx{#1\undefined}
}%
\providecommand \@ifnum [1]{%
 \ifnum #1\expandafter \@firstoftwo
 \else \expandafter \@secondoftwo
 \fi
}%
\providecommand \@ifx [1]{%
 \ifx #1\expandafter \@firstoftwo
 \else \expandafter \@secondoftwo
 \fi
}%
\providecommand \natexlab [1]{#1}%
\providecommand \enquote  [1]{``#1''}%
\providecommand \bibnamefont  [1]{#1}%
\providecommand \bibfnamefont [1]{#1}%
\providecommand \citenamefont [1]{#1}%
\providecommand \href@noop [0]{\@secondoftwo}%
\providecommand \href [0]{\begingroup \@sanitize@url \@href}%
\providecommand \@href[1]{\@@startlink{#1}\@@href}%
\providecommand \@@href[1]{\endgroup#1\@@endlink}%
\providecommand \@sanitize@url [0]{\catcode `\\12\catcode `\$12\catcode
  `\&12\catcode `\#12\catcode `\^12\catcode `\_12\catcode `\%12\relax}%
\providecommand \@@startlink[1]{}%
\providecommand \@@endlink[0]{}%
\providecommand \url  [0]{\begingroup\@sanitize@url \@url }%
\providecommand \@url [1]{\endgroup\@href {#1}{\urlprefix }}%
\providecommand \urlprefix  [0]{URL }%
\providecommand \Eprint [0]{\href }%
\providecommand \doibase [0]{http://dx.doi.org/}%
\providecommand \selectlanguage [0]{\@gobble}%
\providecommand \bibinfo  [0]{\@secondoftwo}%
\providecommand \bibfield  [0]{\@secondoftwo}%
\providecommand \translation [1]{[#1]}%
\providecommand \BibitemOpen [0]{}%
\providecommand \bibitemStop [0]{}%
\providecommand \bibitemNoStop [0]{.\EOS\space}%
\providecommand \EOS [0]{\spacefactor3000\relax}%
\providecommand \BibitemShut  [1]{\csname bibitem#1\endcsname}%
\let\auto@bib@innerbib\@empty
%</preamble>
\bibitem [{\citenamefont {Horodecki}\ \emph {et~al.}(2009)\citenamefont
  {Horodecki}, \citenamefont {Horodecki},\ and\ \citenamefont
  {Horodecki}}]{Horodecki2009}%
  \BibitemOpen
  \bibfield  {author} {\bibinfo {author} {\bibfnamefont {Ryszard}\ \bibnamefont
  {Horodecki}}, \bibinfo {author} {\bibfnamefont {Michał}\ \bibnamefont
  {Horodecki}}, \ and\ \bibinfo {author} {\bibfnamefont {Karol}\ \bibnamefont
  {Horodecki}},\ }\bibfield  {title} {\enquote {\bibinfo {title} {{Quantum
  entanglement}},}\ }\href {\doibase 10.1103/RevModPhys.81.865} {\bibfield
  {journal} {\bibinfo  {journal} {Reviews of Modern Physics}\ }\textbf
  {\bibinfo {volume} {81}},\ \bibinfo {pages} {865--942} (\bibinfo {year}
  {2009})},\ \Eprint {http://arxiv.org/abs/0702225} {arXiv:0702225 [quant-ph]}
  \BibitemShut {NoStop}%
\bibitem [{\citenamefont {Lee}\ \emph {et~al.}(2002)\citenamefont {Lee},
  \citenamefont {Kok}, \citenamefont {Cerf},\ and\ \citenamefont
  {Dowling}}]{Lee2002}%
  \BibitemOpen
  \bibfield  {author} {\bibinfo {author} {\bibfnamefont {Hwang}\ \bibnamefont
  {Lee}}, \bibinfo {author} {\bibfnamefont {Pieter}\ \bibnamefont {Kok}},
  \bibinfo {author} {\bibfnamefont {Nicolas}\ \bibnamefont {Cerf}}, \ and\
  \bibinfo {author} {\bibfnamefont {Jonathan}\ \bibnamefont {Dowling}},\
  }\bibfield  {title} {\enquote {\bibinfo {title} {{Linear optics and
  projective measurements alone suffice to create large-photon-number path
  entanglement}},}\ }\href {\doibase 10.1103/PhysRevA.65.030101} {\bibfield
  {journal} {\bibinfo  {journal} {Physical Review A}\ }\textbf {\bibinfo
  {volume} {65}},\ \bibinfo {pages} {030101} (\bibinfo {year} {2002})},\
  \Eprint {http://arxiv.org/abs/0109080} {arXiv:0109080 [quant-ph]}
  \BibitemShut {NoStop}%
\bibitem [{\citenamefont {Knill}\ \emph {et~al.}(2001)\citenamefont {Knill},
  \citenamefont {Laflamme},\ and\ \citenamefont {Milburn}}]{Knill2001}%
  \BibitemOpen
  \bibfield  {author} {\bibinfo {author} {\bibfnamefont {E}~\bibnamefont
  {Knill}}, \bibinfo {author} {\bibfnamefont {R}~\bibnamefont {Laflamme}}, \
  and\ \bibinfo {author} {\bibfnamefont {G~J}\ \bibnamefont {Milburn}},\
  }\bibfield  {title} {\enquote {\bibinfo {title} {{A scheme for efficient
  quantum computation with linear optics.}}}\ }\href {\doibase
  10.1038/35051009} {\bibfield  {journal} {\bibinfo  {journal} {Nature}\
  }\textbf {\bibinfo {volume} {409}},\ \bibinfo {pages} {46--52} (\bibinfo
  {year} {2001})}\BibitemShut {NoStop}%
\bibitem [{\citenamefont {Kok}\ \emph {et~al.}(2007)\citenamefont {Kok},
  \citenamefont {Nemoto}, \citenamefont {Ralph}, \citenamefont {Dowling},\ and\
  \citenamefont {Milburn}}]{Kok2007}%
  \BibitemOpen
  \bibfield  {author} {\bibinfo {author} {\bibfnamefont {Pieter}\ \bibnamefont
  {Kok}}, \bibinfo {author} {\bibfnamefont {Kae}\ \bibnamefont {Nemoto}},
  \bibinfo {author} {\bibfnamefont {T.~C.}\ \bibnamefont {Ralph}}, \bibinfo
  {author} {\bibfnamefont {Jonathan~P.}\ \bibnamefont {Dowling}}, \ and\
  \bibinfo {author} {\bibfnamefont {G.~J.}\ \bibnamefont {Milburn}},\
  }\bibfield  {title} {\enquote {\bibinfo {title} {{Linear optical quantum
  computing with photonic qubits}},}\ }\href {\doibase
  10.1103/RevModPhys.79.135} {\bibfield  {journal} {\bibinfo  {journal}
  {Reviews of Modern Physics}\ }\textbf {\bibinfo {volume} {79}},\ \bibinfo
  {pages} {135--174} (\bibinfo {year} {2007})}\BibitemShut {NoStop}%
\bibitem [{\citenamefont {Aaronson}\ and\ \citenamefont
  {Arkhipov}(2010)}]{Aaronson2010}%
  \BibitemOpen
  \bibfield  {author} {\bibinfo {author} {\bibfnamefont {Scott}\ \bibnamefont
  {Aaronson}}\ and\ \bibinfo {author} {\bibfnamefont {Alex}\ \bibnamefont
  {Arkhipov}},\ }\bibfield  {title} {\enquote {\bibinfo {title} {{The
  Computational Complexity of Linear Optics}},}\ }\href
  {http://arxiv.org/abs/1011.3245} {\ ,\ \bibinfo {pages} {94} (\bibinfo {year}
  {2010})},\ \Eprint {http://arxiv.org/abs/1011.3245} {arXiv:1011.3245}
  \BibitemShut {NoStop}%
\bibitem [{\citenamefont {Bourennane}\ \emph {et~al.}(2004)\citenamefont
  {Bourennane}, \citenamefont {Eibl}, \citenamefont {Gaertner}, \citenamefont
  {Kurtsiefer}, \citenamefont {Cabello},\ and\ \citenamefont
  {Weinfurter}}]{Bourennane2004}%
  \BibitemOpen
  \bibfield  {author} {\bibinfo {author} {\bibfnamefont {Mohamed}\ \bibnamefont
  {Bourennane}}, \bibinfo {author} {\bibfnamefont {Manfred}\ \bibnamefont
  {Eibl}}, \bibinfo {author} {\bibfnamefont {Sascha}\ \bibnamefont {Gaertner}},
  \bibinfo {author} {\bibfnamefont {Christian}\ \bibnamefont {Kurtsiefer}},
  \bibinfo {author} {\bibfnamefont {Ad\'{a}n}\ \bibnamefont {Cabello}}, \ and\
  \bibinfo {author} {\bibfnamefont {Harald}\ \bibnamefont {Weinfurter}},\
  }\bibfield  {title} {\enquote {\bibinfo {title} {{Decoherence-Free Quantum
  Information Processing with Four-Photon Entangled States}},}\ }\href
  {\doibase 10.1103/PhysRevLett.92.107901} {\bibfield  {journal} {\bibinfo
  {journal} {Physical Review Letters}\ }\textbf {\bibinfo {volume} {92}},\
  \bibinfo {pages} {107901} (\bibinfo {year} {2004})},\ \Eprint
  {http://arxiv.org/abs/0309041} {arXiv:0309041 [quant-ph]} \BibitemShut
  {NoStop}%
\bibitem [{\citenamefont {Wu}\ \emph {et~al.}(2013)\citenamefont {Wu},
  \citenamefont {Walther},\ and\ \citenamefont {Lidar}}]{Wu2011}%
  \BibitemOpen
  \bibfield  {author} {\bibinfo {author} {\bibfnamefont {Lian-Ao}\ \bibnamefont
  {Wu}}, \bibinfo {author} {\bibfnamefont {Philip}\ \bibnamefont {Walther}}, \
  and\ \bibinfo {author} {\bibfnamefont {Daniel~A.}\ \bibnamefont {Lidar}},\
  }\bibfield  {title} {\enquote {\bibinfo {title} {{No-go theorem for passive
  single-rail linear optical quantum computing.}}}\ }\href {\doibase
  10.1038/srep01394} {\bibfield  {journal} {\bibinfo  {journal} {Scientific
  reports}\ }\textbf {\bibinfo {volume} {3}},\ \bibinfo {pages} {1394}
  (\bibinfo {year} {2013})},\ \Eprint {http://arxiv.org/abs/1107.4646}
  {arXiv:1107.4646} \BibitemShut {NoStop}%
\bibitem [{\citenamefont {Tichy}\ \emph {et~al.}(2012)\citenamefont {Tichy},
  \citenamefont {Tiersch}, \citenamefont {Mintert},\ and\ \citenamefont
  {Buchleitner}}]{Tichy2012}%
  \BibitemOpen
  \bibfield  {author} {\bibinfo {author} {\bibfnamefont {Malte~C}\ \bibnamefont
  {Tichy}}, \bibinfo {author} {\bibfnamefont {Markus}\ \bibnamefont {Tiersch}},
  \bibinfo {author} {\bibfnamefont {Florian}\ \bibnamefont {Mintert}}, \ and\
  \bibinfo {author} {\bibfnamefont {Andreas}\ \bibnamefont {Buchleitner}},\
  }\bibfield  {title} {\enquote {\bibinfo {title} {{Many-particle interference
  beyond many-boson and many-fermion statistics}},}\ }\href {\doibase
  10.1088/1367-2630/14/9/093015} {\bibfield  {journal} {\bibinfo  {journal}
  {New Journal of Physics}\ }\textbf {\bibinfo {volume} {14}},\ \bibinfo
  {pages} {093015} (\bibinfo {year} {2012})}\BibitemShut {NoStop}%
\bibitem [{\citenamefont {Tichy}(2013)}]{Tichy2013}%
  \BibitemOpen
  \bibfield  {author} {\bibinfo {author} {\bibfnamefont {Malte~C.}\
  \bibnamefont {Tichy}},\ }\bibfield  {title} {\enquote {\bibinfo {title}
  {{Interference of Identical Particles from Entanglement to
  Boson-Sampling}},}\ }\href {http://arxiv.org/abs/1312.4266} {\ ,\ \bibinfo
  {pages} {46} (\bibinfo {year} {2013})},\ \Eprint
  {http://arxiv.org/abs/1312.4266} {arXiv:1312.4266} \BibitemShut {NoStop}%
\bibitem [{\citenamefont {Migdał}\ \emph {et~al.}(2013)\citenamefont
  {Migdał}, \citenamefont {Rodriguez-Laguna},\ and\ \citenamefont
  {Lewenstein}}]{Migdal2013aaa}%
  \BibitemOpen
  \bibfield  {author} {\bibinfo {author} {\bibfnamefont {Piotr}\ \bibnamefont
  {Migdał}}, \bibinfo {author} {\bibfnamefont {Javier}\ \bibnamefont
  {Rodriguez-Laguna}}, \ and\ \bibinfo {author} {\bibfnamefont {Maciej}\
  \bibnamefont {Lewenstein}},\ }\bibfield  {title} {\enquote {\bibinfo {title}
  {{Entanglement classes of permutation-symmetric qudit states: Symmetric
  operations suffice}},}\ }\href {\doibase 10.1103/PhysRevA.88.012335}
  {\bibfield  {journal} {\bibinfo  {journal} {Physical Review A}\ }\textbf
  {\bibinfo {volume} {88}},\ \bibinfo {pages} {012335} (\bibinfo {year}
  {2013})},\ \Eprint {http://arxiv.org/abs/1305.1506} {arXiv:1305.1506}
  \BibitemShut {NoStop}%
\bibitem [{\citenamefont {Hong}\ \emph {et~al.}(1987)\citenamefont {Hong},
  \citenamefont {Ou},\ and\ \citenamefont {Mandel}}]{Hong1987}%
  \BibitemOpen
  \bibfield  {author} {\bibinfo {author} {\bibfnamefont {C.~K.}\ \bibnamefont
  {Hong}}, \bibinfo {author} {\bibfnamefont {Z.~Y.}\ \bibnamefont {Ou}}, \ and\
  \bibinfo {author} {\bibfnamefont {L.}~\bibnamefont {Mandel}},\ }\bibfield
  {title} {\enquote {\bibinfo {title} {{Measurement of subpicosecond time
  intervals between two photons by interference}},}\ }\href {\doibase
  10.1103/PhysRevLett.59.2044} {\bibfield  {journal} {\bibinfo  {journal}
  {Physical Review Letters}\ }\textbf {\bibinfo {volume} {59}},\ \bibinfo
  {pages} {2044--2046} (\bibinfo {year} {1987})}\BibitemShut {NoStop}%
\bibitem [{\citenamefont {Sudbery}(2001)}]{Sudbery2000}%
  \BibitemOpen
  \bibfield  {author} {\bibinfo {author} {\bibfnamefont {Anthony}\ \bibnamefont
  {Sudbery}},\ }\bibfield  {title} {\enquote {\bibinfo {title} {{On local
  invariants of pure three-qubit states}},}\ }\href {\doibase
  10.1088/0305-4470/34/3/323} {\bibfield  {journal} {\bibinfo  {journal}
  {Journal of Physics A: Mathematical and General}\ }\textbf {\bibinfo {volume}
  {34}},\ \bibinfo {pages} {643--652} (\bibinfo {year} {2001})},\ \Eprint
  {http://arxiv.org/abs/0001116} {arXiv:0001116 [quant-ph]} \BibitemShut
  {NoStop}%
\bibitem [{\citenamefont {de~Vicente}\ \emph {et~al.}(2012)\citenamefont
  {de~Vicente}, \citenamefont {Carle}, \citenamefont {Streitberger},\ and\
  \citenamefont {Kraus}}]{DeVicente2012}%
  \BibitemOpen
  \bibfield  {author} {\bibinfo {author} {\bibfnamefont {J.~I.}\ \bibnamefont
  {de~Vicente}}, \bibinfo {author} {\bibfnamefont {T.}~\bibnamefont {Carle}},
  \bibinfo {author} {\bibfnamefont {C.}~\bibnamefont {Streitberger}}, \ and\
  \bibinfo {author} {\bibfnamefont {B.}~\bibnamefont {Kraus}},\ }\bibfield
  {title} {\enquote {\bibinfo {title} {{Complete Set of Operational Measures
  for the Characterization of Three-Qubit Entanglement}},}\ }\href {\doibase
  10.1103/PhysRevLett.108.060501} {\bibfield  {journal} {\bibinfo  {journal}
  {Physical Review Letters}\ }\textbf {\bibinfo {volume} {108}},\ \bibinfo
  {pages} {060501} (\bibinfo {year} {2012})},\ \Eprint
  {http://arxiv.org/abs/1106.4774} {arXiv:1106.4774} \BibitemShut {NoStop}%
\bibitem [{\citenamefont {Carteret}\ \emph {et~al.}(2000)\citenamefont
  {Carteret}, \citenamefont {Higuchi},\ and\ \citenamefont
  {Sudbery}}]{Carteret2000}%
  \BibitemOpen
  \bibfield  {author} {\bibinfo {author} {\bibfnamefont {H.~A.}\ \bibnamefont
  {Carteret}}, \bibinfo {author} {\bibfnamefont {A.}~\bibnamefont {Higuchi}}, \
  and\ \bibinfo {author} {\bibfnamefont {A.}~\bibnamefont {Sudbery}},\
  }\bibfield  {title} {\enquote {\bibinfo {title} {{Multipartite generalization
  of the Schmidt decomposition}},}\ }\href {\doibase 10.1063/1.1319516}
  {\bibfield  {journal} {\bibinfo  {journal} {Journal of Mathematical Physics}\
  }\textbf {\bibinfo {volume} {41}},\ \bibinfo {pages} {7932} (\bibinfo {year}
  {2000})},\ \Eprint {http://arxiv.org/abs/0006125} {arXiv:0006125 [quant-ph]}
  \BibitemShut {NoStop}%
\bibitem [{\citenamefont {Ac\'{\i}n}\ \emph
  {et~al.}(2001{\natexlab{a}})\citenamefont {Ac\'{\i}n}, \citenamefont
  {Bru\ss}, \citenamefont {Lewenstein},\ and\ \citenamefont
  {Sanpera}}]{Acin2001}%
  \BibitemOpen
  \bibfield  {author} {\bibinfo {author} {\bibfnamefont {A.}~\bibnamefont
  {Ac\'{\i}n}}, \bibinfo {author} {\bibfnamefont {D.}~\bibnamefont {Bru\ss}},
  \bibinfo {author} {\bibfnamefont {M.}~\bibnamefont {Lewenstein}}, \ and\
  \bibinfo {author} {\bibfnamefont {A.}~\bibnamefont {Sanpera}},\ }\bibfield
  {title} {\enquote {\bibinfo {title} {{Classification of Mixed Three-Qubit
  States}},}\ }\href {\doibase 10.1103/PhysRevLett.87.040401} {\bibfield
  {journal} {\bibinfo  {journal} {Physical Review Letters}\ }\textbf {\bibinfo
  {volume} {87}},\ \bibinfo {pages} {040401} (\bibinfo {year}
  {2001}{\natexlab{a}})}\BibitemShut {NoStop}%
\bibitem [{\citenamefont {Verstraete}\ \emph {et~al.}(2003)\citenamefont
  {Verstraete}, \citenamefont {Dehaene},\ and\ \citenamefont {{De
  Moor}}}]{Verstraete2003}%
  \BibitemOpen
  \bibfield  {author} {\bibinfo {author} {\bibfnamefont {Frank}\ \bibnamefont
  {Verstraete}}, \bibinfo {author} {\bibfnamefont {Jeroen}\ \bibnamefont
  {Dehaene}}, \ and\ \bibinfo {author} {\bibfnamefont {Bart}\ \bibnamefont {{De
  Moor}}},\ }\bibfield  {title} {\enquote {\bibinfo {title} {{Normal forms and
  entanglement measures for multipartite quantum states}},}\ }\href {\doibase
  10.1103/PhysRevA.68.012103} {\bibfield  {journal} {\bibinfo  {journal}
  {Physical Review A}\ }\textbf {\bibinfo {volume} {68}},\ \bibinfo {pages}
  {012103} (\bibinfo {year} {2003})},\ \Eprint {http://arxiv.org/abs/0105090}
  {arXiv:0105090 [quant-ph]} \BibitemShut {NoStop}%
\bibitem [{\citenamefont {Kraus}(2010{\natexlab{a}})}]{Kraus2010a}%
  \BibitemOpen
  \bibfield  {author} {\bibinfo {author} {\bibfnamefont {B.}~\bibnamefont
  {Kraus}},\ }\bibfield  {title} {\enquote {\bibinfo {title} {{Local unitary
  equivalence and entanglement of multipartite pure states}},}\ }\href
  {\doibase 10.1103/PhysRevA.82.032121} {\bibfield  {journal} {\bibinfo
  {journal} {Physical Review A}\ }\textbf {\bibinfo {volume} {82}},\ \bibinfo
  {pages} {032121} (\bibinfo {year} {2010}{\natexlab{a}})},\ \Eprint
  {http://arxiv.org/abs/1005.5295} {arXiv:1005.5295} \BibitemShut {NoStop}%
\bibitem [{\citenamefont {Kraus}(2010{\natexlab{b}})}]{Kraus2010b}%
  \BibitemOpen
  \bibfield  {author} {\bibinfo {author} {\bibfnamefont {B.}~\bibnamefont
  {Kraus}},\ }\bibfield  {title} {\enquote {\bibinfo {title} {{Local Unitary
  Equivalence of Multipartite Pure States}},}\ }\href {\doibase
  10.1103/PhysRevLett.104.020504} {\bibfield  {journal} {\bibinfo  {journal}
  {Physical Review Letters}\ }\textbf {\bibinfo {volume} {104}},\ \bibinfo
  {pages} {020504} (\bibinfo {year} {2010}{\natexlab{b}})},\ \Eprint
  {http://arxiv.org/abs/0909.5152} {arXiv:0909.5152} \BibitemShut {NoStop}%
\bibitem [{\citenamefont {Fetter}\ and\ \citenamefont
  {Walecka}(1971)}]{FetterBook1971}%
  \BibitemOpen
  \bibfield  {author} {\bibinfo {author} {\bibfnamefont {AL}~\bibnamefont
  {Fetter}}\ and\ \bibinfo {author} {\bibfnamefont {JD}~\bibnamefont
  {Walecka}},\ }\href {http://www.worldcat.org/oclc/108056} {\emph {\bibinfo
  {title} {{Quantum theory of many-particle systems}}}}\ (\bibinfo {year}
  {1971})\BibitemShut {NoStop}%
\bibitem [{\citenamefont {Reck}\ \emph {et~al.}(1994)\citenamefont {Reck},
  \citenamefont {Zeilinger}, \citenamefont {Bernstein},\ and\ \citenamefont
  {Bertani}}]{Reck1994}%
  \BibitemOpen
  \bibfield  {author} {\bibinfo {author} {\bibfnamefont {Michael}\ \bibnamefont
  {Reck}}, \bibinfo {author} {\bibfnamefont {Anton}\ \bibnamefont {Zeilinger}},
  \bibinfo {author} {\bibfnamefont {Herbert~J.}\ \bibnamefont {Bernstein}}, \
  and\ \bibinfo {author} {\bibfnamefont {Philip}\ \bibnamefont {Bertani}},\
  }\bibfield  {title} {\enquote {\bibinfo {title} {{Experimental realization of
  any discrete unitary operator}},}\ }\href {\doibase
  10.1103/PhysRevLett.73.58} {\bibfield  {journal} {\bibinfo  {journal}
  {Physical Review Letters}\ }\textbf {\bibinfo {volume} {73}},\ \bibinfo
  {pages} {58--61} (\bibinfo {year} {1994})}\BibitemShut {NoStop}%
\bibitem [{\citenamefont {Weyl}(1939)}]{Weyl1997}%
  \BibitemOpen
  \bibfield  {author} {\bibinfo {author} {\bibfnamefont {H}~\bibnamefont
  {Weyl}},\ }\bibfield  {title} {\enquote {\bibinfo {title} {{The classical
  groups: their invariants and representations}},}\ }\href
  {http://www.worldcat.org/oclc/1419841} {\  (\bibinfo {year}
  {1939})}\BibitemShut {NoStop}%
\bibitem [{\citenamefont {Kraft}\ and\ \citenamefont
  {Procesi}(1996)}]{Kraft1996}%
  \BibitemOpen
  \bibfield  {author} {\bibinfo {author} {\bibfnamefont {Hanspeter}\
  \bibnamefont {Kraft}}\ and\ \bibinfo {author} {\bibfnamefont {Claudio}\
  \bibnamefont {Procesi}},\ }\href
  {http://jones.math.unibas.ch/~kraft/Papers/KP-Primer.pdf} {\emph {\bibinfo
  {title} {Representations}}},\ \bibinfo {number} {July}\ (\bibinfo {year}
  {1996})\BibitemShut {NoStop}%
\bibitem [{\citenamefont {Grassl}\ \emph {et~al.}(1998)\citenamefont {Grassl},
  \citenamefont {R\"{o}tteler},\ and\ \citenamefont {Beth}}]{Grassl1998}%
  \BibitemOpen
  \bibfield  {author} {\bibinfo {author} {\bibfnamefont {Markus}\ \bibnamefont
  {Grassl}}, \bibinfo {author} {\bibfnamefont {Martin}\ \bibnamefont
  {R\"{o}tteler}}, \ and\ \bibinfo {author} {\bibfnamefont {Thomas}\
  \bibnamefont {Beth}},\ }\bibfield  {title} {\enquote {\bibinfo {title}
  {{Computing local invariants of quantum-bit systems}},}\ }\href {\doibase
  10.1103/PhysRevA.58.1833} {\bibfield  {journal} {\bibinfo  {journal}
  {Physical Review A}\ }\textbf {\bibinfo {volume} {58}},\ \bibinfo {pages}
  {1833--1839} (\bibinfo {year} {1998})},\ \Eprint
  {http://arxiv.org/abs/9712040} {arXiv:9712040 [quant-ph]} \BibitemShut
  {NoStop}%
\bibitem [{\citenamefont {Biamonte}\ \emph {et~al.}(2013)\citenamefont
  {Biamonte}, \citenamefont {Bergholm},\ and\ \citenamefont
  {Lanzagorta}}]{Biamonte2013invariants}%
  \BibitemOpen
  \bibfield  {author} {\bibinfo {author} {\bibfnamefont {Jacob}\ \bibnamefont
  {Biamonte}}, \bibinfo {author} {\bibfnamefont {Ville}\ \bibnamefont
  {Bergholm}}, \ and\ \bibinfo {author} {\bibfnamefont {Marco}\ \bibnamefont
  {Lanzagorta}},\ }\bibfield  {title} {\enquote {\bibinfo {title} {{Tensor
  network methods for invariant theory}},}\ }\href {\doibase
  10.1088/1751-8113/46/47/475301} {\bibfield  {journal} {\bibinfo  {journal}
  {Journal of Physics A: Mathematical and Theoretical}\ }\textbf {\bibinfo
  {volume} {46}},\ \bibinfo {pages} {475301} (\bibinfo {year} {2013})},\
  \Eprint {http://arxiv.org/abs/1209.0631} {arXiv:1209.0631} \BibitemShut
  {NoStop}%
\bibitem [{\citenamefont {Vrana}(2011)}]{Vrana2011}%
  \BibitemOpen
  \bibfield  {author} {\bibinfo {author} {\bibfnamefont {Peter}\ \bibnamefont
  {Vrana}},\ }\bibfield  {title} {\enquote {\bibinfo {title} {{The algebra of
  local unitary invariants of identical particles}},}\ }\href
  {http://arxiv.org/abs/1107.2438} {\  (\bibinfo {year} {2011})},\ \Eprint
  {http://arxiv.org/abs/1107.2438} {arXiv:1107.2438} \BibitemShut {NoStop}%
\bibitem [{\citenamefont {Li}\ \emph {et~al.}(2001)\citenamefont {Li},
  \citenamefont {Zeng}, \citenamefont {Liu},\ and\ \citenamefont
  {Long}}]{Li2001}%
  \BibitemOpen
  \bibfield  {author} {\bibinfo {author} {\bibfnamefont {Y.}~\bibnamefont
  {Li}}, \bibinfo {author} {\bibfnamefont {B.}~\bibnamefont {Zeng}}, \bibinfo
  {author} {\bibfnamefont {X.}~\bibnamefont {Liu}}, \ and\ \bibinfo {author}
  {\bibfnamefont {G.}~\bibnamefont {Long}},\ }\bibfield  {title} {\enquote
  {\bibinfo {title} {{Entanglement in a two-identical-particle system}},}\
  }\href {\doibase 10.1103/PhysRevA.64.054302} {\bibfield  {journal} {\bibinfo
  {journal} {Physical Review A}\ }\textbf {\bibinfo {volume} {64}},\ \bibinfo
  {pages} {054302} (\bibinfo {year} {2001})},\ \Eprint
  {http://arxiv.org/abs/0104101} {arXiv:0104101 [quant-ph]} \BibitemShut
  {NoStop}%
\bibitem [{\citenamefont {Aulbach}(2011)}]{Aulbach2011a}%
  \BibitemOpen
  \bibfield  {author} {\bibinfo {author} {\bibfnamefont {Martin}\ \bibnamefont
  {Aulbach}},\ }\bibfield  {title} {\enquote {\bibinfo {title} {{Symmetric
  entanglement classes for n qubits}},}\ }\href
  {http://arxiv.org/abs/1103.0271} {\ ,\ \bibinfo {pages} {7} (\bibinfo {year}
  {2011})},\ \Eprint {http://arxiv.org/abs/1103.0271} {arXiv:1103.0271}
  \BibitemShut {NoStop}%
\bibitem [{\citenamefont {Markham}(2011)}]{Markham2010}%
  \BibitemOpen
  \bibfield  {author} {\bibinfo {author} {\bibfnamefont {Damian J.~H.}\
  \bibnamefont {Markham}},\ }\bibfield  {title} {\enquote {\bibinfo {title}
  {{Entanglement and symmetry in permutation-symmetric states}},}\ }\href
  {\doibase 10.1103/PhysRevA.83.042332} {\bibfield  {journal} {\bibinfo
  {journal} {Physical Review A}\ }\textbf {\bibinfo {volume} {83}},\ \bibinfo
  {pages} {042332} (\bibinfo {year} {2011})},\ \Eprint
  {http://arxiv.org/abs/1001.0343} {arXiv:1001.0343} \BibitemShut {NoStop}%
\bibitem [{\citenamefont {Mathonet}\ \emph {et~al.}(2010)\citenamefont
  {Mathonet}, \citenamefont {Krins}, \citenamefont {Godefroid}, \citenamefont
  {Lamata}, \citenamefont {Solano},\ and\ \citenamefont
  {Bastin}}]{Mathonet2010}%
  \BibitemOpen
  \bibfield  {author} {\bibinfo {author} {\bibfnamefont {P.}~\bibnamefont
  {Mathonet}}, \bibinfo {author} {\bibfnamefont {S.}~\bibnamefont {Krins}},
  \bibinfo {author} {\bibfnamefont {M.}~\bibnamefont {Godefroid}}, \bibinfo
  {author} {\bibfnamefont {L.}~\bibnamefont {Lamata}}, \bibinfo {author}
  {\bibfnamefont {E.}~\bibnamefont {Solano}}, \ and\ \bibinfo {author}
  {\bibfnamefont {T.}~\bibnamefont {Bastin}},\ }\bibfield  {title} {\enquote
  {\bibinfo {title} {{Entanglement equivalence of N-qubit symmetric states}},}\
  }\href {\doibase 10.1103/PhysRevA.81.052315} {\bibfield  {journal} {\bibinfo
  {journal} {Physical Review A}\ }\textbf {\bibinfo {volume} {81}},\ \bibinfo
  {pages} {052315} (\bibinfo {year} {2010})},\ \Eprint
  {http://arxiv.org/abs/0908.0886} {arXiv:0908.0886} \BibitemShut {NoStop}%
\bibitem [{\citenamefont {Mead}(1992)}]{Mead1992}%
  \BibitemOpen
  \bibfield  {author} {\bibinfo {author} {\bibfnamefont {D~G}\ \bibnamefont
  {Mead}},\ }\bibfield  {title} {\enquote {\bibinfo {title} {{Newton's
  Identities}},}\ }\href {\doibase 10.2307/2324242} {\bibfield  {journal}
  {\bibinfo  {journal} {The American Mathematical Monthly}\ }\textbf {\bibinfo
  {volume} {99}},\ \bibinfo {pages} {749} (\bibinfo {year} {1992})}\BibitemShut
  {NoStop}%
\bibitem [{\citenamefont {Ferrie}(2011)}]{Ferrie2011}%
  \BibitemOpen
  \bibfield  {author} {\bibinfo {author} {\bibfnamefont {Christopher}\
  \bibnamefont {Ferrie}},\ }\bibfield  {title} {\enquote {\bibinfo {title}
  {{Quasi-probability representations of quantum theory with applications to
  quantum information science}},}\ }\href {\doibase
  10.1088/0034-4885/74/11/116001} {\bibfield  {journal} {\bibinfo  {journal}
  {Reports on Progress in Physics}\ }\textbf {\bibinfo {volume} {74}},\
  \bibinfo {pages} {116001} (\bibinfo {year} {2011})},\ \Eprint
  {http://arxiv.org/abs/1010.2701} {arXiv:1010.2701} \BibitemShut {NoStop}%
\bibitem [{\citenamefont {Auerbach}(1994)}]{Auerbach1994}%
  \BibitemOpen
  \bibfield  {author} {\bibinfo {author} {\bibfnamefont {Assa}\ \bibnamefont
  {Auerbach}},\ }\href {http://www.worldcat.org/oclc/30028928} {\emph {\bibinfo
  {title} {{Interacting electrons and quantum magnetism}}}}\ (\bibinfo
  {publisher} {Springer-Verlang},\ \bibinfo {address} {New York},\ \bibinfo
  {year} {1994})\BibitemShut {NoStop}%
\bibitem [{\citenamefont {Daley}\ \emph {et~al.}(2012)\citenamefont {Daley},
  \citenamefont {Pichler}, \citenamefont {Schachenmayer},\ and\ \citenamefont
  {Zoller}}]{Daley2012}%
  \BibitemOpen
  \bibfield  {author} {\bibinfo {author} {\bibfnamefont {A.~J.}\ \bibnamefont
  {Daley}}, \bibinfo {author} {\bibfnamefont {H.}~\bibnamefont {Pichler}},
  \bibinfo {author} {\bibfnamefont {J.}~\bibnamefont {Schachenmayer}}, \ and\
  \bibinfo {author} {\bibfnamefont {P.}~\bibnamefont {Zoller}},\ }\bibfield
  {title} {\enquote {\bibinfo {title} {{Measuring Entanglement Growth in Quench
  Dynamics of Bosons in an Optical Lattice}},}\ }\href {\doibase
  10.1103/PhysRevLett.109.020505} {\bibfield  {journal} {\bibinfo  {journal}
  {Physical Review Letters}\ }\textbf {\bibinfo {volume} {109}},\ \bibinfo
  {pages} {020505} (\bibinfo {year} {2012})},\ \Eprint
  {http://arxiv.org/abs/1205.1521} {arXiv:1205.1521} \BibitemShut {NoStop}%
\bibitem [{\citenamefont {Beauzamy}\ \emph {et~al.}(1990)\citenamefont
  {Beauzamy}, \citenamefont {Bombieri}, \citenamefont {Enflo},\ and\
  \citenamefont {Montgomery}}]{Beauzamy1990}%
  \BibitemOpen
  \bibfield  {author} {\bibinfo {author} {\bibfnamefont {Bernard}\ \bibnamefont
  {Beauzamy}}, \bibinfo {author} {\bibfnamefont {Enrico}\ \bibnamefont
  {Bombieri}}, \bibinfo {author} {\bibfnamefont {Per}\ \bibnamefont {Enflo}}, \
  and\ \bibinfo {author} {\bibfnamefont {Hugh~L}\ \bibnamefont {Montgomery}},\
  }\bibfield  {title} {\enquote {\bibinfo {title} {{Products of polynomials in
  many variables}},}\ }\href {\doibase 10.1016/0022-314X(90)90075-3} {\bibfield
   {journal} {\bibinfo  {journal} {Journal of Number Theory}\ }\textbf
  {\bibinfo {volume} {36}},\ \bibinfo {pages} {219--245} (\bibinfo {year}
  {1990})}\BibitemShut {NoStop}%
\bibitem [{\citenamefont {Pinasco}(2012)}]{Pinasco2012}%
  \BibitemOpen
  \bibfield  {author} {\bibinfo {author} {\bibfnamefont {Dami\'{a}n}\
  \bibnamefont {Pinasco}},\ }\bibfield  {title} {\enquote {\bibinfo {title}
  {{Lower bounds for norms of products of polynomials via Bombieri
  inequality}},}\ }\href {\doibase 10.1090/S0002-9947-2012-05403-1} {\bibfield
  {journal} {\bibinfo  {journal} {Transactions of the American Mathematical
  Society}\ }\textbf {\bibinfo {volume} {364}},\ \bibinfo {pages} {3993--4010}
  (\bibinfo {year} {2012})}\BibitemShut {NoStop}%
\bibitem [{\citenamefont {Pinasco}\ and\ \citenamefont
  {Zalduendo}(2005)}]{Pinasco2005}%
  \BibitemOpen
  \bibfield  {author} {\bibinfo {author} {\bibfnamefont {Dami\'{a}n}\
  \bibnamefont {Pinasco}}\ and\ \bibinfo {author} {\bibfnamefont {Ignacio}\
  \bibnamefont {Zalduendo}},\ }\bibfield  {title} {\enquote {\bibinfo {title}
  {{Integral representation of holomorphic functions on Banach spaces}},}\
  }\href {\doibase 10.1016/j.jmaa.2004.11.031} {\bibfield  {journal} {\bibinfo
  {journal} {Journal of Mathematical Analysis and Applications}\ }\textbf
  {\bibinfo {volume} {308}},\ \bibinfo {pages} {159--174} (\bibinfo {year}
  {2005})}\BibitemShut {NoStop}%
\bibitem [{\citenamefont {Ac\'{\i}n}\ \emph {et~al.}(2000)\citenamefont
  {Ac\'{\i}n}, \citenamefont {Andrianov}, \citenamefont {Costa}, \citenamefont
  {Jan\'{e}}, \citenamefont {Latorre},\ and\ \citenamefont
  {Tarrach}}]{acin2000generalized}%
  \BibitemOpen
  \bibfield  {author} {\bibinfo {author} {\bibfnamefont {A.}~\bibnamefont
  {Ac\'{\i}n}}, \bibinfo {author} {\bibfnamefont {A.}~\bibnamefont
  {Andrianov}}, \bibinfo {author} {\bibfnamefont {L.}~\bibnamefont {Costa}},
  \bibinfo {author} {\bibfnamefont {E.}~\bibnamefont {Jan\'{e}}}, \bibinfo
  {author} {\bibfnamefont {J.}~\bibnamefont {Latorre}}, \ and\ \bibinfo
  {author} {\bibfnamefont {R.}~\bibnamefont {Tarrach}},\ }\bibfield  {title}
  {\enquote {\bibinfo {title} {{Generalized Schmidt Decomposition and
  Classification of Three-Quantum-Bit States}},}\ }\href {\doibase
  10.1103/PhysRevLett.85.1560} {\bibfield  {journal} {\bibinfo  {journal}
  {Physical Review Letters}\ }\textbf {\bibinfo {volume} {85}},\ \bibinfo
  {pages} {1560--1563} (\bibinfo {year} {2000})},\ \Eprint
  {http://arxiv.org/abs/0003050} {arXiv:0003050 [quant-ph]} \BibitemShut
  {NoStop}%
\bibitem [{\citenamefont {Ac\'{\i}n}\ \emph
  {et~al.}(2001{\natexlab{b}})\citenamefont {Ac\'{\i}n}, \citenamefont
  {Andrianov}, \citenamefont {Jan\'{e}},\ and\ \citenamefont
  {Tarrach}}]{Acin2001a}%
  \BibitemOpen
  \bibfield  {author} {\bibinfo {author} {\bibfnamefont {A}~\bibnamefont
  {Ac\'{\i}n}}, \bibinfo {author} {\bibfnamefont {A}~\bibnamefont {Andrianov}},
  \bibinfo {author} {\bibfnamefont {E}~\bibnamefont {Jan\'{e}}}, \ and\
  \bibinfo {author} {\bibfnamefont {R}~\bibnamefont {Tarrach}},\ }\bibfield
  {title} {\enquote {\bibinfo {title} {{Three-qubit pure-state canonical
  forms}},}\ }\href {\doibase 10.1088/0305-4470/34/35/301} {\bibfield
  {journal} {\bibinfo  {journal} {Journal of Physics A: Mathematical and
  General}\ }\textbf {\bibinfo {volume} {34}},\ \bibinfo {pages} {6725--6739}
  (\bibinfo {year} {2001}{\natexlab{b}})},\ \Eprint
  {http://arxiv.org/abs/0009107} {arXiv:0009107 [quant-ph]} \BibitemShut
  {NoStop}%
\bibitem [{\citenamefont {Zanardi}\ and\ \citenamefont
  {Rasetti}(1997)}]{Zanardi1997}%
  \BibitemOpen
  \bibfield  {author} {\bibinfo {author} {\bibfnamefont {Paolo}\ \bibnamefont
  {Zanardi}}\ and\ \bibinfo {author} {\bibfnamefont {M.}~\bibnamefont
  {Rasetti}},\ }\bibfield  {title} {\enquote {\bibinfo {title} {{Noiseless
  Quantum Codes}},}\ }\href {\doibase 10.1103/PhysRevLett.79.3306} {\bibfield
  {journal} {\bibinfo  {journal} {Physical Review Letters}\ }\textbf {\bibinfo
  {volume} {79}},\ \bibinfo {pages} {3306--3309} (\bibinfo {year} {1997})},\
  \Eprint {http://arxiv.org/abs/9705044} {arXiv:9705044 [quant-ph]}
  \BibitemShut {NoStop}%
\bibitem [{\citenamefont {Weinfurter}\ and\ \citenamefont
  {Żukowski}(2001)}]{Weinfurter2001}%
  \BibitemOpen
  \bibfield  {author} {\bibinfo {author} {\bibfnamefont {Harald}\ \bibnamefont
  {Weinfurter}}\ and\ \bibinfo {author} {\bibfnamefont {Marek}\ \bibnamefont
  {Żukowski}},\ }\bibfield  {title} {\enquote {\bibinfo {title} {{Four-photon
  entanglement from down-conversion}},}\ }\href {\doibase
  10.1103/PhysRevA.64.010102} {\bibfield  {journal} {\bibinfo  {journal}
  {Physical Review A}\ }\textbf {\bibinfo {volume} {64}},\ \bibinfo {pages}
  {010102} (\bibinfo {year} {2001})},\ \Eprint {http://arxiv.org/abs/0103049}
  {arXiv:0103049 [quant-ph]} \BibitemShut {NoStop}%
\bibitem [{\citenamefont {Migdał}\ and\ \citenamefont
  {Banaszek}(2011)}]{Migdal2011dfs}%
  \BibitemOpen
  \bibfield  {author} {\bibinfo {author} {\bibfnamefont {Piotr}\ \bibnamefont
  {Migdał}}\ and\ \bibinfo {author} {\bibfnamefont {Konrad}\ \bibnamefont
  {Banaszek}},\ }\bibfield  {title} {\enquote {\bibinfo {title} {{Immunity of
  information encoded in decoherence-free subspaces to particle loss}},}\
  }\href {\doibase 10.1103/PhysRevA.84.052318} {\bibfield  {journal} {\bibinfo
  {journal} {Physical Review A}\ }\textbf {\bibinfo {volume} {84}},\ \bibinfo
  {pages} {052318} (\bibinfo {year} {2011})},\ \Eprint
  {http://arxiv.org/abs/1107.3786} {arXiv:1107.3786} \BibitemShut {NoStop}%
\bibitem [{\citenamefont {Wasilewski}\ and\ \citenamefont
  {Banaszek}(2007)}]{Wasilewski2007}%
  \BibitemOpen
  \bibfield  {author} {\bibinfo {author} {\bibfnamefont {Wojciech}\
  \bibnamefont {Wasilewski}}\ and\ \bibinfo {author} {\bibfnamefont {Konrad}\
  \bibnamefont {Banaszek}},\ }\bibfield  {title} {\enquote {\bibinfo {title}
  {{Protecting an optical qubit against photon loss}},}\ }\href {\doibase
  10.1103/PhysRevA.75.042316} {\bibfield  {journal} {\bibinfo  {journal}
  {Physical Review A}\ }\textbf {\bibinfo {volume} {75}},\ \bibinfo {pages}
  {042316} (\bibinfo {year} {2007})},\ \Eprint {http://arxiv.org/abs/0702075}
  {arXiv:0702075 [quant-ph]} \BibitemShut {NoStop}%
\bibitem [{\citenamefont {Johansson}\ \emph {et~al.}(2012)\citenamefont
  {Johansson}, \citenamefont {Nation},\ and\ \citenamefont
  {Nori}}]{Johansson2012a}%
  \BibitemOpen
  \bibfield  {author} {\bibinfo {author} {\bibfnamefont {J.R.}\ \bibnamefont
  {Johansson}}, \bibinfo {author} {\bibfnamefont {P.D.}\ \bibnamefont
  {Nation}}, \ and\ \bibinfo {author} {\bibfnamefont {Franco}\ \bibnamefont
  {Nori}},\ }\bibfield  {title} {\enquote {\bibinfo {title} {{QuTiP: An
  open-source Python framework for the dynamics of open quantum systems}},}\
  }\href {\doibase 10.1016/j.cpc.2012.02.021} {\bibfield  {journal} {\bibinfo
  {journal} {Computer Physics Communications}\ }\textbf {\bibinfo {volume}
  {183}},\ \bibinfo {pages} {1760--1772} (\bibinfo {year} {2012})}\BibitemShut
  {NoStop}%
\bibitem [{\citenamefont {{Bar Moshe}}(2011)}]{PhysicsSE_BarMoshe2011}%
  \BibitemOpen
  \bibfield  {author} {\bibinfo {author} {\bibfnamefont {David}\ \bibnamefont
  {{Bar Moshe}}},\ }\href {http://physics.stackexchange.com/q/27686} {\enquote
  {\bibinfo {title} {{Which symmetric pure qudit states can be reached within
  local operations?}}}\ }\bibinfo {howpublished} {Physics Stack Exchange}
  (\bibinfo {year} {2011})\BibitemShut {NoStop}%
\bibitem [{\citenamefont {Petrunin}(2012)}]{MO_Petrunin2012}%
  \BibitemOpen
  \bibfield  {author} {\bibinfo {author} {\bibfnamefont {Anton}\ \bibnamefont
  {Petrunin}},\ }\href {http://mathoverflow.net/q/101761} {\enquote {\bibinfo
  {title} {{Sets of vectors related by a rotation}},}\ }\bibinfo {howpublished}
  {MathOverflow} (\bibinfo {year} {2012})\BibitemShut {NoStop}%
\end{thebibliography}
\end{document}